\numberwithin{equation}{section} 
\newtheorem{theorem}{Theorem}
\numberwithin{theorem}{section}
\newtheorem{proposition}[theorem]{Proposition}
\newtheorem{lemma}[theorem]{Lemma}
\newtheorem{definition}[theorem]{Definition}
\newtheorem{remark}[theorem]{Remark}
\DeclareMathOperator{\Cl}{Cl}
\DeclareMathOperator{\Div}{Div}
\DeclareMathOperator{\Dr}{Dr}
\DeclareMathOperator{\End}{End}
\DeclareMathOperator{\Gal}{Gal}
\DeclareMathOperator{\Hom}{Hom}
\DeclareMathOperator{\Ker}{Ker}
\DeclareMathOperator{\Norm}{Norm}
\DeclareMathOperator{\Pic}{Pic}
\DeclareMathOperator{\lc}{lc}
\DeclareMathOperator{\norm}{\mathfrak{n}}
\DeclareMathOperator{\rgcd}{rgcd}
\DeclareMathOperator{\Frac}{Frac}
\DeclareMathOperator{\val}{val}
\newcommand{\Achar}{\mathfrak{p}}
\newcommand{\A}{\mathbf{A}}
\newcommand{\C}{\mathbf{C}}
\newcommand{\F}{\mathbb{F}}
\newcommand{\Fq}{\mathbb{F}_q}
\newcommand{\HH}{\mathcal{H}}
\newcommand{\Kbar}{{\overline{K}}}
\newcommand{\Ktau}{{K\{\tau\}}}
\newcommand{\K}{\mathbf{K}}
\newcommand{\Lbar}{{\overline{L}}}
\newcommand{\Ltau}{{L\{\tau\}}}
\newcommand{\N}{\mathbb{Z}_{\geqslant 0}}
\newcommand{\OP}{\mathcal{O}_{\Pid}}
\newcommand{\Otilde}{\tilde{\O}}
\newcommand{\Pid}{\mathfrak{P}}
\newcommand{\Q}{\mathbb{Q}}
\newcommand{\Z}{\mathbb{Z}}
\newcommand{\R}{\mathbb{R}}
\newcommand{\aid}{\mathfrak{a}}
\newcommand{\degtau}{\deg_\tau}
\newcommand{\height}{\mathrm{h}}
\newcommand{\idef}[1]{\emph{#1}}  % Ant: Inline def.
\newcommand{\into}{\hookrightarrow}
\newcommand{\isonorm}{\mathfrak{n}}
\newcommand{\onto}{\twoheadrightarrow}
\newcommand{\pid}{\mathfrak{p}}
\newcommand{\red}{\mathrm{red}}
\newcommand{\DeclareObjectsAsIn}{In this section, $d, m, p, h, f, \xi, \HH,
\A_\HH, \pid, L$ are as in Section~\ref{subsec:correspondence}.}
\renewcommand{\O}{\mathcal{O}}
\renewcommand{\geq}{\geqslant}
\renewcommand{\j}{\mathrm{j}}
\renewcommand{\k}{\mathbf{k}}
\renewcommand{\leq}{\leqslant}
\renewcommand{\ker}{\Ker}
\newcommand{\SMgeq}{\mathrm{SM}^{\geqslant 1}}
\newcommand{\GroupAction}{\textsc{GroupAction}}
\newcommand{\EuclidRGCD}{\textsc{EuclidRGCD}}
\newcommand{\PrimeIsogenyToPrimeIdeal}{\textsc{PrimeIsogenyToPrimeIdeal}}
\newcommand{\IsogenyToIdeal}{\textsc{IsogenyToIdeal}}
\newcommand{\OreEuclideanDivision}{\textsc{OreEuclideanDivision}}
\setlist[enumerate, 1]{label=(\roman*)}
\setlist[enumerate, 2]{label=(\alph*)}
\setlist[enumerate, 3]{label=\arabic*.}
\setlist[enumerate]{font=\normalshape}  % Avoid italic in theorems
\setlist[itemize, 1]{label=---}
\setlist[itemize, 2]{label=$\ast$}
\setlist[itemize, 3]{label=$\bullet$}
\setlist{noitemsep}
\title{Computing a Group Action from the Class Field Theory of
Imaginary Hyperelliptic Function Fields}
\author{Antoine Leudière and Pierre-Jean Spaenlehauer} 
\affil{Université de Lorraine, Inria, CNRS}
\date{}
\begin{document}
\maketitle

\begin{abstract}
  We explore algorithmic aspects of a simply transitive commutative group
  action coming from the class field theory of
  imaginary hyperelliptic function fields. Namely, the Jacobian of an imaginary
  hyperelliptic curve defined over $\Fq$ acts on a subset of isomorphism
  classes of Drinfeld modules.  We describe an algorithm to compute the group
  action efficiently. This is a function field analog of the
  Couveignes-Rostovtsev-Stolbunov group action. We report on an explicit
  computation done with our proof-of-concept C++/NTL
  implementation; it took a fraction of a second on a standard computer. We prove that the problem of
  inverting the group action reduces to the problem of finding isogenies of
  fixed $\tau$-degree between Drinfeld $\Fq[X]$-modules, which is solvable in
  polynomial time thanks to an algorithm by Wesolowski. We give asymptotic
  complexity bounds for all algorithms presented in this paper.
\end{abstract}

\section*{Introduction}

\paragraph{Context} The class group $\Cl(\Q(\sqrt{-D}))$ of an imaginary
quadratic number field acts on the set of isomorphism classes of elliptic
curves having complex multiplication by $\Q(\sqrt{-D})$. This simply transitive
group action is a central object of the class field theory of imaginary
quadratic number fields, as it provides an explicit way to handle their class
fields and Galois groups. Drinfeld modules --- initially called \emph{elliptic
modules} \cite{drinfel1974elliptic} --- were introduced to create an explicit
and similar class field theory for function fields. There is a comparable group
action in this context; it is expressed in terms of isogenies of Drinfeld
modules. Computing it is the main objective of this work.

Algorithmic questions on Drinfeld modules arose from the start of the
development of the theory, see e.g.~\cite{dummit1994rank}. More recently,
effectivity topics in Drinfeld modules were revisited from the point of view of
modern computer algebra \cite{caranay-article,caranay-thesis}, proposing new
applications, such as factorization of univariate polynomials
\cite{doliskani2021drinfeld} and cryptography \cite{bombar,jn}. In
this paper, we aim at growing the algorithmic toolbox for isogenies of Drinfeld
modules. We focus on their relationship with the class field theory of
imaginary hyperelliptic function fields, and we build on the work of
\cite{caranay-article,caranay-thesis,schost}.

\paragraph{Main results} Although Drinfeld modules might appear more abstract
than elliptic curves, they turn out to be very convenient for concrete
computations. The theory of complex multiplication of Drinfeld modules shares
many similarities with that of elliptic curves. Let $L/\Fq$ be a finite
extension, and let $\phi$ be a rank-two $\Fq[X]$-Drinfeld module defined over
$L$. The characteristic polynomial of the Frobenius endomorphism of $\phi$ has
degree $2$ in $\Fq[X][Y]$, it defines a quadratic extension $\k$ of $\Fq(X)$.
If further $[L:\Fq]$ is odd, $\k$ is an imaginary quadratic function field.
Then, the class field theory of $\k$ gives a simply transitive group action on
the set $S$ of isomorphism classes of Drinfeld modules having complex
multiplication by $\k$. The underlying group $G$ turns out to be the Galois
group of an abelian extension, which is unramified at all finite places and for
which the place at infinity splits completely~\cite[Th.~15.6]{hayes}. For
simplicity, we restrict our work to the case of imaginary hyperelliptic
function fields; $G$ will simply be the degree-$0$ Picard group $\Pic^0(\HH)$
of the underlying hyperelliptic curve $\HH$.

We design an efficient algorithm to compute the group action in the class field
theory of hyperelliptic function field. Surprisingly, the algorithm is quite
easy both to describe and to implement: it mainly relies on computing the
right-GCD of two Ore polynomials. We provide an asymptotic complexity bound. We
also study the difficulty of the so-called \idef{inverse problem}, i.e.
computing an element $g\in G$ such that $g\cdot x = y$, for $x, y\in S$ given.
We prove that it reduces to the problem of computing isogenies of Drinfeld
modules, by providing an algorithm computing the ideal in the coordinate ring
of the hyperelliptic curve which corresponds to a given isogeny. Computing this
isogeny can now be done efficiently using an algorithm by
Wesolowski~\cite{cryptoeprint:2022/438}. We finish our investigation by
providing asymptotic complexity bounds, and we present a concrete calculation
of the group action, using a C++/NTL
implementation.

\paragraph{Related works and applications} An important question is the
computation of zeta functions, which are closely related to the computation of
characteristic polynomials of endomorphisms of Drinfeld modules
\cite[Sec.~5]{gek91}. Isogenies of finite Drinfeld modules are connected to
this line of work by the fact that they correspond to ideals in endomorphism
rings. During the last decade, the algorithmic toolbox gravitating around these
topics and its interaction with computer algebra and algorithmic number theory
has attracted a lot of attention, see
e.g.~\cite{caranay-article,caranay-thesis,schost,doliskani2021drinfeld,
garai2022computing,kuhn2022finding}.

The realization of the class field theory of imaginary quadratic number fields
via isogenies of elliptic curves is the cornerstone of \emph{isogeny-based
cryptography}, whose foundations were laid down by Couveignes
\cite{couveignes2006hard}, and independently by Rostovtsev and Stolbunov
\cite{rostovtsev2006public}. Their ideas paved the way towards the SIDH and the
CSIDH cryptosystems~\cite{sidh,csidh}. In \cite{jn}, the authors propose
Drinfeld modules analogs of SIDH and CSIDH in the supersingular setting, and
they provide polynomial-time attacks on these constructions.

We originally wanted to design an analog of the Couveignes-Rostovtsev-Stolbunov
(CRS) cryptosystem in the context of ordinary Drinfeld modules. In a previous
preprint (\url{https://ia.cr/2022/349}), we replaced elliptic curves by
Drinfeld modules, and the class group of an imaginary quadratic field by the
Jacobian of an hyperelliptic curve. In the meantime, Wesolowski found an
algorithm efficiently computing isogenies of finite Drinfeld $\Fq[X]$-modules
\cite{cryptoeprint:2022/438}. This attack hinders our potential cryptographic
applications. However, this line of research is only flourishing; Drinfeld
modules provide numerous useful features (complex multiplication, efficient
algorithms, deep mathematical theory) and vast areas remain unexplored. For
instance, a recent result showed how Carlitz modules --- which are special
cases of Drinfeld modules --- can be used to design new cryptographic
protocols~\cite{bombar}. This emphasizes the diversity of potential
applications of Drinfeld modules, which may be discovered once a versatile
algorithmic toolbox is available.

\paragraph{Organization of the paper} Section~\ref{sec:drin_mods} recalls the
algebraic construction of Drinfeld modules, as well as basic tools.
Section~\ref{sec:CFT} focuses on complex multiplication and on the class field
theory of imaginary hyperelliptic function fields. The main result of this
section is a reduction of the group action from class field theory to our
specific setting. We can then handle the group action using finite objects. In
Section~\ref{sec:algos}, we describe the main algorithm computing the action.
We also give a method to recover the ideal class corresponding to a given
isogeny of ordinary Drinfeld modules, for which we provide asymptotic
complexity bounds. Finally, we present an explicit computation of
the group action.

\paragraph{Acknowledgements} We are grateful to Xavier Caruso, Pierrick Gaudry, Emmanuel
Thomé, and Benjamin Wesolowski for fruitful discussions. This work received
funding from the France 2030 program managed by the French National Research
Agency under grant agreement No. ANR-22-PETQ-0008 PQ-TLS.

\section{Drinfeld modules}\label{sec:drin_mods}

Classical textbooks on Drinfeld modules are \cite{gos98}, \cite{ros02} and
\cite{villa-salvador}. Finite Drinfeld modules are studied in depth in
\cite{gek91}. For algorithmic perspectives, see \cite{schost},
\cite{caranay-thesis} and \cite{caranay-article}.

Throughout this paper, $\Fq$ is the finite field with $q$ elements.

\subsection{Ore polynomials}

The core mathematical object for the algebraic construction of Drinfeld modules
is the ring $\Ktau$ of univariate Ore polynomials. Let $\Fq \into K$ be a field
extension, and $\tau: x\mapsto x^q$ denote the Frobenius endomorphism of
$\Kbar$, which is $\Fq$-linear.

\begin{definition}[{\cite[Def.~1.1.3]{gos98}}]\label{def:Orepols}
	The \idef{ring of Ore polynomials} $\Ktau$ is the subring of $\Fq$-linear
	endomorphisms of $\Kbar$ of the form \[\sum_{0\leq i\leq n} a_i \tau^i,
	\quad n\in\N, a_i\in K,\] equipped with the addition and the composition of
	$\Fq$-linear endomorphisms.
\end{definition}

In $\Ktau$, we write $1 = \tau^0$ for the identity endomorphism. For $i,j\in\N$,
$\tau^i\tau^j = \tau^{i+j}$. In Definition~\ref{def:Orepols}, and if $a_n\neq 0$, the integer $n$ is called
the $\tau$-degree, and we say that $P$ is \idef{monic} if $a_n = 1$. 
For every $a\in K$, the equality $\tau a = a^q
\tau$ holds true. Therefore, the ring $\Ktau$ is not commutative as soon as
$\Fq\neq K$. The center of $\Ktau$ is $\Fq\left[\tau^{[K:\Fq]}\right]$ if $K$ is
finite, otherwise it is $\Fq$. 

The ring $\Ktau$ is
left-Euclidean~\cite[Prop.~1.6.2]{gos98} for the $\tau$-degree, i.e. for every $P_1, P_2\in\Ktau$,
there exist $Q, R\in\Ktau$ satisfying
\[\begin{cases}
		P_1 = QP_2 + R, \\
		\degtau(R) < \degtau(P_2).
	\end{cases}\]
We therefore define the \idef{right-greatest common divisor}, abbreviated
$\rgcd$, of any non-empty subset $S\subset\Ktau$ as the unique monic generator
of the left-ideal generated by $S$ in $\Ktau$. The $\rgcd$ of two Ore
polynomials can be efficiently computed using
\cite[Alg.~6]{caruso-le_borgne-fact} (altogether with
\cite[Prop.~3.1]{caruso-le_borgne-mult}), or using Euclid's algorithm (see
Algorithm~\ref{algo:ore-euclidean-division}).

We say that $P$ is \idef{separable} if the coefficient of $\tau^0$ is nonzero,
i.e. $\tau$ does not right-divide $P$; we say that $P$ is \idef{inseparable} if
it is not separable; we say that $P$ is \idef{purely inseparable} if
$P=\alpha\tau^i$ for some $\alpha\in K^\times, i\in\Z_{> 0}$. Consequently, for
any $P\in\Ktau$ there exists $\ell\in\N$ and some separable $s\in\Ktau$ such
that $P = \tau^\ell s$. The integer $\ell$ is called the \idef{height} of $P$
and denoted $\height(P)$. Using left-Euclidean division, it can be proved that
for any $P_1, P_2\in\Ktau$ such that $P_1$ is separable, $\Ker(P_2)\subset
\Ker(P_1)$ if and only if $P_2$ right-divides~$P_1$.

\subsection{General Drinfeld modules}\label{subsec:preli::drinfeld-modules}

Let $\k$ be an algebraic function field of transcendence degree $1$ over $\Fq$
(i.e. a finite field extension of $\Fq(X)$), $\infty$ be a place of $\k$, and
$\A\subset \k$ be the ring of functions that are regular outside $\infty$. Let
$K/\Fq$ be a field extension equipped with a $\Fq$-algebra morphism $\gamma:
\A\to K$. The kernel of $\gamma$ is a prime ideal called the
\idef{$\A$-characteristic of $K$}. There are mainly two cases which are of
interest for Drinfeld modules:

\begin{enumerate}
	\item The field $K$ is a finite extension of $\A/\mathfrak p$ for some nonzero prime
		ideal $\mathfrak p\subset \A$, $\gamma$ is the composition $\A\onto\A/\pid\into
		K$, the $\A$-characteristic of $K$ is $\mathfrak{p}$; in this
		case, we will write $L$
    instead of $K$ (see Section~\ref{subsec:preli::finite}).
	\item The field $K$ is a finite extension of $\k$ and the morphism $\gamma$ is injective.
\end{enumerate}

By~\cite[Ch.~7, Cor.~2.7]{lorenzini1996invitation}, quotients of $\A$ by nonzero
ideals $\aid$ are finite-dimensional $\Fq$-vector spaces. The \idef{degree of
$\aid$} is $\deg(\aid)\coloneqq\log_q(\#(\A/\mathfrak a))$. For a nonzero
$a\in\A$, we set $\deg(a) \coloneqq \deg(a\A)$.

\begin{definition}[{\cite[Def. 4.4.2]{gos98}, \cite[Def.~1.1]{gek91}}]\label{d-def}
	A \idef{Drinfeld $\A$-module over $K$} is an $\Fq$-algebra morphism $\phi:
	\A\to\Ktau$ such that,
  \begin{itemize}
		\item for all $a\in \A$, the coefficient of $\tau^0$ in $\phi(a)$ is
			$\gamma(a)$;
		\item there exists $a\in \A$ such that $\deg_\tau(\phi(a))>0$.
  \end{itemize}
\end{definition}

Let $\phi$ be a Drinfeld $\A$-module over $K$. For any $a\in \A$, the image
$\phi(a)$ is denoted $\phi_a$. An important feature of Drinfeld modules is that
there exists an integer $r\in\Z_{>0}$, called \idef{the rank of $\phi$}, such that
$\deg_\tau(\phi_a) = r\deg(a)$ for any $a\in \A$~\cite[Def.~4.5.4]{gos98}. We
let $\Dr_r(\A, K)$ denote the set of Drinfeld $\A$-modules over $K$ with rank
$r$. A special case of interest is when $\A=\Fq[X]$, in which $\phi$ is
uniquely determined by $\phi_X$ and its rank is $\deg_\tau(\phi_X)$.

A Drinfeld $\A$-module $\phi$ induces a $\A$-module law on $\Kbar$, defined by
$(a, x) \mapsto \phi_a(x)$, where $a\in \A, x\in\Kbar$. 
When $\A =\Fq[X]$, this structure of $\Fq[X]$-module on $\Kbar$ can be viewed as
an analog of the $\Z$-module law on the group of points $\mathcal E(\Kbar)$ of
an elliptic curve defined over $K$.

Let $\psi$ be another Drinfeld $\A$-module over $K$. A \idef{morphism of
Drinfeld modules $\iota:\phi\to\psi$} is an Ore polynomial $\iota\in \Ktau$ such that
$\iota\phi_a = \psi_a \iota$ for all $a\in \A$. An \idef{isogeny} is a nonzero morphism.
If $K'/K$ is a field extension, a \idef{$K'$-morphism} (resp.
\idef{$K'$-morphism}) $\phi\to\psi$ is a
morphism that lives in $K'\{\tau\}$. Throughout this paper, isogenies between
$\phi$ and $\psi$ are by default $K$-isogenies. 
The endomorphisms of $\phi$ form a ring
denoted $\End(\phi)$ which always contains $\Fq$ and elements of the form
$\phi_a$, $a\in \A$. Said otherwise, $\A$ is isomorphic to a subring of
$\End(\phi)$. When $\A = \Fq[X]$, endomorphisms $\phi_a$ are analogs of
integer multiplication on elliptic curves.

\subsection{Finite Drinfeld modules}\label{subsec:preli::finite}

In this section, we specialize in Drinfeld $\A$-modu\-les over a finite field
$L$, which are also called \idef{finite Drinfeld modules}. In that case, we fix
$\pid \subset \A$ a nonzero prime ideal and $L$ a finite extension of $\A/\pid$,
equipped with the canonical morphism $\gamma: \A\onto \A/\pid \into L$. Finite
Drinfeld modules have a special endomorphism $\tau_L \coloneqq \tau^{[L:\Fq]}$,
called the \emph{Frobenius endomorphism}. It is worth noticing that for any
Drinfeld $\A$-module $\phi$ over $L$, $\End(\phi)$ contains $\Fq[\tau_L]$. Any
isogeny $\iota:\phi\to\psi$ can be written $\tau^{\ell\deg(\pid)} s$ for some
$\ell\in\N$ and a separable $s\in\Ltau$ \cite[§(1.4), Eq.~(ii)]{gek91}.
Furthermore, the endomorphism $\phi_a$ is separable if and only if $a$ is not
contained in $\Ker(\gamma)=\pid$.

We define now the norm of an isogeny $\iota:\phi\to \psi$ of finite Drinfeld
modules. By
\cite[Th.~III.8.1]{lang}, there exists a map $\chi$, called the
\idef{Euler-Poincaré characteristic}, which sends finite $\A$-modules to ideals
in $\A$ and which satisfies the following properties for any finite $\A$-modules
$M_1, M_2, M_3$: 
\begin{enumerate}
	\item $\chi(0) = \A$ and $\chi(\A/\mathfrak{q}) = \mathfrak{q}$ if
		$\mathfrak{q}$ is prime,
	\item $\chi(M_1) = \chi(M_2)$ if $M_1\simeq M_2$,
	\item $\chi(M_1) = \chi(M_2)\chi(M_3)$ if $0 \to M_2 \to M_1 \to M_3 \to 0$ is a
		short exact~sequence.
\end{enumerate}

The \idef{norm of $\iota$} is defined in \cite[§(3.9)]{gek91} as $\norm(\iota)
\coloneqq \Achar^{\height(\iota)/\deg(\pid)} \chi(\Ker(\iota))$. See
\cite[Lem.~3.10]{gek91} for its properties. 
As for elliptic curves, ``being isogenous'' is an equivalence relation. For
$a\in\A$, we say that $\iota:\phi\to\psi$ is \idef{an $a$-isogeny} if $\iota$
right-divides $\phi_a$ in $\Ltau$. We emphasize that $a$ may not generate the
norm of the isogeny. In fact, $\iota$ is an $a$-isogeny if and only if $a$
belongs to the norm of $\iota$, which need not be a principal ideal. Every isogeny is an $a$-isogeny for some
nonzero $a\in \A$. If $a\notin \pid$ and $\iota$ is an $a$-isogeny, then
$\iota$ is separable
and there exists another separable $a$-isogeny $\hat \iota:\psi\rightarrow\phi$, called the dual $a$-isogeny, such
that $\hat \iota \cdot \iota = \phi_a$ and $\iota\cdot\hat \iota = \psi_a$. See
e.g.~\cite[§(4.1)]{deligne1987survey}.

Drinfeld modules have an analog for Vélu's formula. Let $\iota\in\Ltau$ be
nonzero. There exists a finite Drinfeld $\A$-module $\psi$ defined over $L$
such that $\iota$ is an isogeny $\phi\to\psi$ if and only if $\Ker \iota$ is an
$\A$-submodule of $\Lbar$ (endowed with the $\A$-module structure $(a,
x)\mapsto \phi_a(x)$ for $a\in \A, x\in\Lbar$) and $\deg(\pid)$ divides
$\height(\iota)$~\cite[§(1.4)]{gek91}. We emphasize that for any $a\in \A$, the
Ore polynomial $\psi_a$ can be explicitly computed. The $\tau$-degrees of
$\phi_a$ and $\psi_a$ are equal. By equating the coefficients of $\iota\cdot
\phi_a$ and $\psi_a\cdot \iota$, we obtain simple formulas for computing
iteratively the coefficients of $\psi_a$. For instance, if $\iota$ is
separable, by writing

\[\begin{cases}
	\iota = \sum_{0\leq i\leq \deg_\tau(\iota)} \iota_i \tau^i,\\
	\phi_a = \sum_{0\leq i\leq \deg_\tau(\phi_a)} \lambda_i \tau^i,\\
	\psi_a = \sum_{0\leq i \leq \deg_\tau(\phi_a)} \mu_i \tau^i,
\end{cases}\]
we obtain the following formulas for $i\in\llbracket 0,
\deg_\tau(\phi_a)\rrbracket$:
\begin{equation}\label{eq:Velu}
  \mu_i = \frac{1}{\iota_0^{q^i}}\left(\sum_{0\leq
j\leq i}\iota_j\lambda_{i-j}^{q^j} - \sum_{0\leq j\leq i-1} \mu_j
\iota_{i-j}^{q^j}\right).
\end{equation}

\section{Class field theory of imaginary hyperelliptic function fields}\label{sec:CFT}

\subsection{Complex multiplication for rank-two finite Drinfeld
modules}\label{subsec:class::complexmult-rank2}

Rank-two Drinfeld modules over finite fields enjoy a theory of complex
multiplication which shares many similarities with that of elliptic curves defined over
finite fields. The main difference is that imaginary quadratic number fields are
replaced by imaginary quadratic function fields, namely quadratic extensions of
$\Fq(X)$ for which the place at infinity associated to the discrete valuation
ring \[\{f/g \mid f, g\in\Fq[X],\;
\deg(g)\geq \deg(f)\}\subset \Fq(X)\] ramifies.

We start by fixing a nonzero prime ideal $\pid\subset\Fq[X]$ and by considering
Drinfeld modules in $\Dr_2(\Fq[X], L)$, where $L$ is a finite extension of
$\Fq[X]/\pid$ endowed with the canonical map $\gamma: \Fq[X]\onto\Fq[X]/\pid\into
L$. A Drinfeld module $\phi\in\Dr_2(\Fq[X], L)$ is completely described by the
image $\phi_X$ of $X$, which is an Ore polynomial of the form \[\phi_X = \Delta
\tau^2 + g\tau + \gamma(X), \quad g\in L, \Delta\in L^\times.\]

The Frobenius endomorphism $\tau_L\in \End(\phi)$ satisfies a quadratic equation
\cite[Cor.~3.4]{gek91} \cite[Th.~1]{schost}; for any $\phi\in\Dr_2(\Fq[X], L)$,
there is a unique monic polynomial $\xi$ in $\Fq[X][Y]$ of the
form \[\xi = Y^2 + h(X)Y - f(X)\in\Fq[X][Y],\] such that 
\begin{equation}\label{eq:char_pol}
  \begin{cases}
	\xi(\phi_X, \tau_L) = 0, \\
	\deg(f) = [L:\Fq], \\
  \deg(h) \leq [L:\Fq]\big/2.
\end{cases}
\end{equation}
The polynomial $\xi$ is called the \idef{characteristic polynomial of the
Frobenius endomorphism}. 

\begin{definition}[{\cite[Lemma~(5.2) and Satz~(5.3)]{gekeler1983}}]
\label{def:supersingular}
  Let $\phi\in \Dr_2(\Fq[X], L)$, and let $\xi = Y^2 + h(X) Y - f(X)$ be
  the characteristic polynomial of its Frobenius endomorphism. 
  Then $\phi$ is called \idef{supersingular} if $h\in \pid$, otherwise $\phi$ is
  called \idef{ordinary}.
\end{definition}

If $[L:\Fq]$ is odd and the affine curve defined by $\xi$ in nonsingular, then the degree bounds in \eqref{eq:char_pol} imply that
$\xi$ defines a imaginary hyperelliptic curve over
$\Fq$~\cite[Def.~14.1]{cohen2005handbook}. As $\xi$ is a polynomial of degree
$2$ in $Y$, the affine curve it defines is nonsingular if and only if its discriminant
with respect to $Y$ is squarefree.
In this case, the endomorphism ring
is maximal and completely described:

\begin{proposition}\label{prop:structure-end}
	Assume $[L:\Fq]$ is odd, let $\phi\in\Dr_2(\Fq[X], L)$ be an ordinary
	rank-2 Drinfeld module, and assume that $\xi$ defines an imaginary
	hyperelliptic curve $\HH$. Then $\End_{\Lbar}(\phi) = \End_L(\phi)$.
	Writing $\A_\HH = \Fq[X][Y]/(\xi)$, the
	$\Fq$-algebras $\End_L(\phi)$ and $\A_\HH$ are isomorphic via
  \[\begin{array}{rcl}
		\A_\HH&\to&\End_L(\phi)\\
    \overline{X}&\mapsto&\phi_X \\
    \overline{Y}&\mapsto&\tau_L.
  \end{array}\]
\end{proposition}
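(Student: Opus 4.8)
The plan is to establish the isomorphism $\A_\HH \xrightarrow{\sim} \End_L(\phi)$ in three stages: first show the map is well-defined, then show it is injective, then show it is surjective, and finally leverage this over $\Lbar$ to get $\End_{\Lbar}(\phi) = \End_L(\phi)$. For well-definedness, I would observe that $\phi_X$ and $\tau_L$ are both elements of $\End_L(\phi)$ (the former because $\A = \Fq[X]$ embeds in the endomorphism ring, the latter being the Frobenius endomorphism of a Drinfeld module over a finite field), that they commute since $\End_L(\phi)$ is generated over $\Fq$ by these two elements and $\tau_L$ lies in the center of $\Ltau$, and that by the defining relation \eqref{eq:char_pol} the pair $(\phi_X, \tau_L)$ satisfies $\xi$. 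Hence the $\Fq$-algebra map $\Fq[X][Y] \to \End_L(\phi)$ sending $X \mapsto \phi_X$, $Y \mapsto \tau_L$ kills the ideal $(\xi)$ and factors through $\A_\HH$.

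For injectivity, I would use that $\A_\HH = \Fq[X][Y]/(\xi)$ is an integral domain (since $\xi$ is irreducible: it defines an imaginary hyperelliptic curve, which is geometrically integral, and in particular $\xi$ has no nontrivial factorization over $\Fq[X]$), so the kernel of the map to $\End_L(\phi)$ is either zero or a nonzero prime. A nonzero kernel would force $\End_L(\phi)$ to be a quotient of $\A_\HH$ of Krull dimension $0$, i.e. the image of $\phi_X$ would be algebraic over $\Fq$, contradicting $\deg_\tau(\phi_X) = 2 > 0$ (so $\phi_X$ is transcendental over $\Fq$ inside $\Ltau$). Therefore the map is injective and $\A_\HH$ embeds in $\End_L(\phi)$.

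The surjectivity is the main obstacle, and here I would invoke the structural facts about endomorphism rings of ordinary finite Drinfeld modules. The key point: for an ordinary rank-two Drinfeld $\Fq[X]$-module over a finite field, the endomorphism ring is commutative and is an $\Fq[X]$-order in the imaginary quadratic function field $\k = \Frac(\A_\HH)$ generated by the Frobenius — this is the function field analog of Deuring's theorem, and it follows from the Tate-type results (e.g. the analysis in \cite{gek91} together with the characterization of ordinary reduction). Since $\xi$ is the \emph{characteristic} polynomial of $\tau_L$ with $\deg_Y \xi = 2 = \rank(\phi)$, the subring $\A_\HH = \Fq[X][\tau_L] \subseteq \End_L(\phi)$ already has the maximal possible rank $2$ as an $\Fq[X]$-module. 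To upgrade "finite index" to "equality" I would use the ordinariness hypothesis: the conductor of $\End_L(\phi)$ over $\Fq[X][\tau_L]$ must be coprime to $\pid$ in the ordinary case (a supersingular obstruction being exactly what $h \in \pid$ detects), and combined with the fact that $\A_\HH$ is already integrally closed away from the singularities of the curve — but since $\xi$ defines a \emph{smooth} (nonsingular in the affine plane) hyperelliptic curve, $\A_\HH$ is in fact the integral closure of $\Fq[X]$ in $\k$, i.e. $\A_\HH$ is the maximal order. A maximal order admitting an embedding of finite index into $\End_L(\phi)$, which is itself an order in the same quadratic field, forces $\A_\HH = \End_L(\phi)$.

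Finally, for $\End_{\Lbar}(\phi) = \End_L(\phi)$: any $\Lbar$-endomorphism is defined over some finite extension $L'/L$, and ordinariness is preserved under base change (the characteristic polynomial of the $L'$-Frobenius is a power-substitution of $\xi$, still with non-vanishing $h$ modulo $\pid$), so the same argument shows $\End_{L'}(\phi) \cong \A_{\HH}$ via the $L'$-Frobenius $\tau_{L'} = \tau_L^{[L':L]}$; since $\tau_{L'}$ is a polynomial in $\tau_L$ with $\Fq[X]$-coefficients under this identification, $\End_{L'}(\phi)$ and $\End_L(\phi)$ are the same subring of $\Lbar\{\tau\}$. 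I expect the delicate point to be pinning down precisely which cited result gives commutativity and the order structure of $\End_L(\phi)$ in the ordinary case; if a clean citation is unavailable I would instead argue directly that any endomorphism commutes with $\tau_L$ hence lies in the commutant of $\Fq[X][\tau_L]$ in $\Ltau$, and then compute that commutant using the relation $\tau a = a^q \tau$ to see it coincides with $\A_\HH$.
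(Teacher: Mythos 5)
Your overall strategy matches the paper's: establish that $\Fq[\phi_X,\tau_L]\cong\A_\HH$, show $\A_\HH$ is the maximal order using smoothness of $\HH$, cite that $\End_L(\phi)$ is a commutative $\Fq[X]$-order in the quadratic field, and conclude equality by maximality. Your injectivity argument (nonzero prime kernel would force $\phi_X$ to be algebraic over $\Fq$, contradicting $\deg_\tau(\phi_{p(X)})=2\deg p$) is a self-contained alternative to the paper's citation that $\xi$ is the \emph{minimal} polynomial of $\tau_L$ over $\Fq[X]$~\cite[Lem.~3.3]{gek91}; it is correct, though the paper's citation is more direct.

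One aside is a distraction: the mention of ``the conductor of $\End_L(\phi)$ over $\Fq[X][\tau_L]$ must be coprime to $\pid$'' plays no role and is not what the argument rests on. Ordinariness is used to ensure commutativity of the endomorphism ring and that it is an order in an imaginary quadratic function field (the paper cites \cite[Th.~6.4.2.(iii)]{caranay-thesis}); the maximality of $\A_\HH$ is a consequence of the affine smoothness of $\HH$, not of ordinariness. You do eventually state the right argument, but the conductor tangent obscures it.

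There is a genuine gap in the last step. You claim that ``the same argument shows $\End_{L'}(\phi)\cong\A_\HH$ via the $L'$-Frobenius,'' but re-running the proposition over $L'$ requires hypotheses you have not verified: $[L':\Fq]$ must be odd, and the characteristic polynomial $\xi'$ of $\tau_{L'}$ (which is \emph{not} $\xi$) must define a smooth imaginary hyperelliptic curve. Neither holds for arbitrary finite $L'/L$, so this route does not close. Moreover, your remark that $\End_{L'}(\phi)$ and $\End_L(\phi)$ ``are the same subring'' because $\tau_{L'}$ is a power of $\tau_L$ only gives the inclusion $\Fq[\phi_X,\tau_{L'}]\subseteq\Fq[\phi_X,\tau_L]$; to identify $\End_{L'}(\phi)$ with the left-hand side you again need the structural result over $L'$, which you have not justified. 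The paper avoids all of this: it cites directly that $\End_L(\phi)$ is a sub-order of $\End_{L'}(\phi)$, both being $\Fq[X]$-orders in the same imaginary quadratic function field, and then uses maximality of $\End_L(\phi)=\A_\HH$ to force equality. You should replace the ``re-apply the proposition over $L'$'' step with this containment-plus-maximality argument.
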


\begin{proof}
  By~\cite[Lem.~3.3]{gek91}, the minimal polynomial of $\tau_L$ over $\Fq[X]$ is
  $\xi$, which implies that the kernel of the map
  \[\begin{array}{rcl}
  \Fq[X][Y]&\to&\End_L(\phi)\\
    X&\mapsto&\phi_X\\
    Y&\mapsto&\tau_L
\end{array}\]
  is the ideal generated by $\xi$ and hence $\Fq[\phi_X, \tau_L]$ is isomorphic
  to $\A_\HH$.

  By \cite[Chap.~5, Th.~10.8]{lorenzini1996invitation}, $\A_\HH$ is the
  integral closure of $\Fq[X]$ in the function field $\Fq(\HH)$. Let $\O$ be an $\Fq[X]$-order in
  $\Fq(\HH)$. Since the canonical field extension $\Fq(X)\into
  \Fq(\HH)=\Frac(\A_\HH)$ has degree $2$, $\O$ must be a rank-$2$
  $\Fq[X]$-module. Let $1,\alpha\in\Fq(\HH)$ be an $\Fq[X]$-basis of $\O$. Then
  $\alpha^2 = a + b\alpha$ for some $a,b\in\Fq[X]$, which implies that $\alpha$
  belongs to the $\Fq[X]$-integral closure of $\Fq[X]$ in $\Fq(\HH)$, which is
  $\A_\HH$. This implies that $\O\subset\A_\HH$. Hence, $\A_\HH$ is maximal.

  Since $\tau_L$ is not in the image of the map $g\mapsto \phi_g$, $\Fq[\phi_X,
  \tau_L]\subset \End_L(\phi)$ is a $2$-dimensional $\Fq[X]$-module in
  $\End_L(\phi)\otimes \Fq(X)$. By \cite[Th.~6.4.2.(iii)]{caranay-thesis},
  $\End_L(\phi)\otimes \Fq(X)$ is an imaginary quadratic function field and
  $\End_L(\phi)$ is an $\Fq[X]$-order in it. Therefore, $\Fq[\phi_X, \tau_L]$ is
  an $\Fq[X]$-order in $\End_L(\phi)\otimes \Fq(X)\simeq
  (\Fq[X][Y]/\xi)\otimes\Fq(X)\simeq \Fq(\HH)$. Finally, notice that
  $\End_L(\phi)$ contains the maximal order $\Fq[\phi_X, \tau_L]$, so it must be
  equal to it. 
  It remains to prove that $\End_{\Lbar}(\phi) = \End_L(\phi)$. For any
  finite extension $L'$ of $L$,
  by \cite[Th.~6.4.2.(iii)]{caranay-thesis}, $\End_{L}(\phi)$ is a
  sub-order of $\End_{L'}(\phi)$. As $\End_{L}(\phi)$ is maximal,
  $\End_{L}(\phi)=\End_{L'}(\phi)$.
\end{proof}

The \idef{$\j$-invariant of $\phi$}, denoted $\j(\phi)$, is the quantity
$g^{q+1}/\Delta$~\cite[Def.~5.4.1]{caranay-thesis}. For every $j\in L$,
there exists a Drinfeld module $\phi\in\Dr_2(\Fq[X], L)$ whose $\j$-invariant is
$j$; it is defined by $j^{-1}\tau^2+\tau+\gamma(X)$ if $j\neq 0$, and $\tau^2 +
\gamma(X)$ otherwise. The $\j$-invariant and the characteristic polynomial serve
as classifying criterion \cite[Rem.~5.4.2]{caranay-thesis},
\cite[Th.~3.5]{gek91}; two Drinfeld modules $\phi,\psi\in\Dr_2(\Fq[X], L)$ are:
\begin{enumerate}
	\item $\Lbar$-isomorphic if and only if they have the same $\j$-invariant,
	\item $L$-isogenous if and only if they have the same characteristic
		polynomial.
\end{enumerate}

The next proposition is an analog for Drinfeld modules
of a classical property of endomorphism rings of ordinary elliptic curves
defined over finite fields~\cite[Prop.~4.19]{cox_primes_2013}, see also
\cite[Thm.~3.3]{kuehnert_isomorphic_2022}.

\begin{proposition}\label{prop:Lisog_Lbarisom} 
  Two ordinary Drinfeld modules in
  $\Dr_2(\Fq[X], L)$ are $L$-isomor\-phic if and only if they are $L$-isogenous
  and $\Lbar$-isomorphic.  
\end{proposition}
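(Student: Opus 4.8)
The plan is to prove the only nontrivial implication: if $\phi,\psi\in\Dr_2(\Fq[X],L)$ are ordinary, $L$-isogenous, and $\Lbar$-isomorphic, then they are $L$-isomorphic. (The converse needs no argument: an $L$-isomorphism is at once an isogeny defined over $L$ and an $\Lbar$-isomorphism.) An isomorphism of Drinfeld modules is invertible in the Ore ring, hence of $\tau$-degree $0$, so an $\Lbar$-isomorphism $\phi\to\psi$ is a scalar $c\in\Lbar^\times$ with $c\,\phi_X=\psi_X\,c$, equivalently $c\,\phi_a\,c^{-1}=\psi_a$ for all $a\in\Fq[X]$. Set $n=[L:\Fq]$ and let $\sigma\colon x\mapsto x^{q^n}$ be the $q^n$-power Frobenius of $\Lbar$, so that an element of $\Lbar$ lies in $L$ exactly when $\sigma$ fixes it.

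First I would run a descent argument on $c$. Applying $\sigma$ coefficientwise to $c\,\phi_X=\psi_X\,c$ and using that $\phi_X,\psi_X$ are defined over $L$, the scalar $\sigma(c)=c^{q^n}$ is again an $\Lbar$-isomorphism $\phi\to\psi$; hence $u\coloneqq c^{q^n-1}$ is an $\Lbar$-automorphism of $\phi$. By Proposition~\ref{prop:structure-end}, $\End_{\Lbar}(\phi)\simeq\A_\HH$, and the units of $\A_\HH$ are the nonzero constants $\Fq^\times$ (a unit has divisor supported above $\infty$ and of degree $0$, hence lies in the field of constants of $\Fq(\HH)$, which is $\Fq$). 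So $u\in\Fq^\times$.

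Next I would track the effect of conjugation by $c$ on the characteristic polynomial. Because $c\,\tau_L\,c^{-1}=c^{1-q^n}\tau_L=u^{-1}\tau_L$ and $u^{-1}\in\Fq^\times$ is central in $\Lbar\{\tau\}$, conjugating the Frobenius relation $\tau_L^2+\phi_h\,\tau_L-\phi_f=0$ by $c$ — where $\xi_\phi=Y^2+h\,Y-f$ — and multiplying through by $u^2$ gives $\tau_L^2+\psi_{uh}\,\tau_L-\psi_{u^2f}=0$, using that $\psi$ is an $\Fq$-algebra morphism so that $u\,\psi_h=\psi_{uh}$ and $u^2\,\psi_f=\psi_{u^2f}$. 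Since $u$ is a nonzero constant, $Y^2+(uh)\,Y-(u^2f)$ is still monic and obeys the degree bounds of~\eqref{eq:char_pol}, so by the uniqueness there it equals $\xi_\psi$. Hence $\xi_\psi=Y^2+(uh)\,Y-(u^2f)$.

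To conclude, the "$L$-isogenous" hypothesis forces $\xi_\phi=\xi_\psi$, i.e. $h=uh$ and $f=u^2f$. As $f\neq0$ we get $u^2=1$, and since $\phi$ is ordinary we have $h\notin\pid$, so $h\neq0$ and $h=uh$ forces $u=1$. Then $c^{q^n}=c$, so $c\in L^\times$ is an $L$-isomorphism $\phi\to\psi$. I expect the one genuinely delicate point — and the actual content of the proposition — to be this final step: being $\Lbar$-isomorphic is also compatible with the "quadratic twist" $u=-1$, and it is precisely the equality of characteristic polynomials together with ordinariness (through $h\neq0$) that rules this out; the rest is routine manipulation in $\Lbar\{\tau\}$.
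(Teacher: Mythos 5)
Your proof is correct, and it takes a genuinely different route from the paper's. The paper's proof is very short: it invokes Lemma~\ref{lem:separable-exists} to produce a separable $L$-isogeny $\iota:\phi\to\psi$, observes that $\lambda^{-1}\iota\in\End_\Lbar(\phi)=\End_L(\phi)\subset\Ltau$ by Proposition~\ref{prop:structure-end}, and concludes from separability ($\iota_0\neq 0$, an element of $L^\times$) that the constant term of $\lambda^{-1}\iota$ forces $\lambda\in L$. You instead run a Galois descent on the isomorphism $c$ itself: the cocycle $u=c^{q^n-1}$ lands in $\Aut_\Lbar(\phi)\subset\A_\HH^\times=\Fq^\times$, conjugation by $c$ twists the Frobenius relation into $\tau_L^2+\psi_{uh}\tau_L-\psi_{u^2f}=0$, and the uniqueness clause in~\eqref{eq:char_pol} together with the $L$-isogeny hypothesis (same characteristic polynomial) and ordinariness ($h\neq 0$) force $u=1$. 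Your approach is longer but more transparent about the obstruction: it exhibits the failure of descent precisely as the quadratic twist $u\mapsto -u$ of $\xi$, and shows how $h\neq 0$ kills it; it also avoids using Lemma~\ref{lem:separable-exists} (whose proof is itself nontrivial, going through the Euler–Poincaré characteristic and norm identities), at the price of needing the characterisation of units in $\A_\HH$ and the uniqueness of the Frobenius characteristic polynomial. One small point worth flagging: you should note that the unique place at infinity of an imaginary hyperelliptic curve has degree one and that $\HH$ is geometrically irreducible, which is what makes $\A_\HH^\times=\Fq^\times$ — these are true in the paper's setting, but the one-line divisor argument silently uses them. Also, you appeal to Proposition~\ref{prop:structure-end} just as the paper does, so the same implicit standing hypotheses ($[L:\Fq]$ odd, $\xi$ cuts out an imaginary hyperelliptic curve) are carried along; this matches the paper's usage.
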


\begin{proof}
	Let $\phi,\psi\in\Dr_2(\Fq[X], L)$ be two ordinary Drinfeld modules which are
	$L$-isogenous and $\Lbar$-isomorphic. Let $\lambda:\phi\to\psi$ be an $\Lbar$-isomorphism and
	$\iota:\phi\to\psi$ be an	$L$-isogeny, then $\lambda^{-1}\iota\in\End_{\Lbar}(\phi)$. By
	Proposition~\ref{prop:structure-end}, $\End_{\Lbar}(\phi) = \End_L(\phi)$, so
	$\lambda^{-1}\iota\in\Ltau$, and therefore $\lambda\in L$.

	Reciprocally, an $L$-isomorphism is an $L$-isogeny and an $\Lbar$-isomorphism.
\end{proof}

We recall that throughout this paper and unless stated otherwise, isogenies are $L$-isogenies and
Drinfeld modules are called isogenous if they are $L$-isogenous.

\subsection{Rank-one Drinfeld modules on imaginary hyperelliptic
curves}\label{subsec:correspondence}

Let $d\geq 5$ be an odd integer and let $m$ be a positive divisor of $d$. Let
$p\in\Fq[X]$ be a monic irreducible polynomial of degree $d/m$ and let $f =
\alpha p(X)^m\in\Fq[X]$ for some $\alpha\in\Fq^\times$. Finally, let $h\in
\Fq[X]$ be a nonzero polynomial of degree at most $(d-1)/2$ which is not
divisible by $p$. This assumption on $h$ is especially important as it ensures
that we will encounter only ordinary Drinfeld modules, see
Definition~\ref{def:supersingular}. Fix $\xi = Y^2 + h(X)Y - f(X)$ and assume
that $\xi$ defines an imaginary hyperelliptic curve $\HH$, i.e. the curve is
smooth in the affine plane~\cite[Def.~14.1]{cohen2005handbook}. As in
Proposition~\ref{prop:structure-end}, set $\A_\HH = \Fq[X][Y]/(\xi)$. The ring
$\A_\HH$ is
isomorphic to the ring of functions of $\HH$ regular outside the place at
infinity. Let $\pid$ be the prime ideal $\langle p(X), Y\rangle$, which has
degree $d$. Let $L$ be a degree-$m$ extension of $\Fq[X][Y]/\pid$; notice that
$[L:\Fq]=d$ is odd; this will have several technical consequences. Set $\gamma:
\A_\HH \onto \A_\HH/\pid\simeq \Fq[X]/(p)\into L$.

The aim of this section is to prove the following correspondence:
\begin{proposition}\label{prop:correspondence}
  There is a bijection between the set of $\Lbar$-isomorphism
  classes in $\Dr_2(\Fq[X], L)$ containing a
  representative whose characteristic polynomial of the Frobenius endomorphism
  is $\xi$, and the set of $\Lbar$-isomorphism classes
  in $\Dr_1(\A_\HH, L)$.

  This bijection sends the class of a Drinfeld module $\phi\in\Dr_2(\Fq[X], L)$ whose
  characteristic polynomial of the Frobenius endomorphism is $\xi$ to the
  class of $\psi\in\Dr_1(\A_\HH, L)$ where $\psi_{\overline X} =
  \phi_X$ and $\psi_{\overline Y} = \tau_L$.
\end{proposition}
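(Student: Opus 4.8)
The plan is to construct the map in both directions and check it is a well-defined bijection. First I would verify that the assignment $\phi\mapsto\psi$ described in the statement actually produces a Drinfeld $\A_\HH$-module of rank one. Given $\phi\in\Dr_2(\Fq[X],L)$ with characteristic polynomial $\xi=Y^2+h(X)Y-f(X)$, Proposition~\ref{prop:structure-end} tells us that $\Fq[\phi_X,\tau_L]\subset\Ltau$ is isomorphic to $\A_\HH=\Fq[X][Y]/(\xi)$ via $\overline X\mapsto\phi_X$, $\overline Y\mapsto\tau_L$. This isomorphism is exactly the prescription $\psi_{\overline X}=\phi_X$, $\psi_{\overline Y}=\tau_L$, so $\psi:\A_\HH\to\Ltau$ is a well-defined $\Fq$-algebra morphism. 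I then need to check the two axioms of Definition~\ref{d-def}: the constant term of $\psi_a$ must be $\gamma(a)$ for all $a\in\A_\HH$, and $\deg_\tau\psi_a>0$ for some $a$. For the constant-term condition, it suffices to check it on the generators $\overline X$ and $\overline Y$: the constant term of $\psi_{\overline X}=\phi_X$ is $\gamma(X)$, which matches since $\gamma$ on $\A_\HH$ restricts to the $\Fq[X]$-characteristic; and the constant term of $\psi_{\overline Y}=\tau_L=\tau^{[L:\Fq]}$ is $0$, which must equal $\gamma(\overline Y)$ — here I use that $\overline Y\in\pid=\langle p(X),Y\rangle=\ker\gamma$, so $\gamma(\overline Y)=0$. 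For the rank, $\deg_\tau\psi_{\overline X}=\deg_\tau\phi_X=2=2\deg(X\Fq[X])$; one computes $\deg(\overline X\,\A_\HH)$, the degree of the ideal generated by $X$ in $\A_\HH$, and checks it equals $2$, which identifies the rank as $1$ (and more conceptually, $\End(\psi)\otimes\Fq(X)=\Fq(\HH)$ has degree two over $\Fq(X)$, forcing rank one by the endomorphism-ring constraints of Proposition~\ref{prop:structure-end}).

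Next I would check that this assignment descends to $\Lbar$-isomorphism classes and lands in the correct target. If $\phi,\phi'$ are $\Lbar$-isomorphic via $\lambda\in\Lbar^\times$, i.e. $\lambda\phi_X=\phi'_X\lambda$, then $\lambda$ also conjugates $\tau_L$ to $\tau_L$ (Frobenius commutes with every element of $\Lbar$ up to the standard twist, and in fact $\lambda\tau_L\lambda^{-1}=\lambda^{1-q^{[L:\Fq]}}\tau_L$, so I must be slightly careful — but an $\Lbar$-isomorphism of the rank-two modules intertwines their Frobenius endomorphisms because Frobenius is intrinsic), hence $\lambda$ is an $\Lbar$-isomorphism $\psi\to\psi'$. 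Conversely, any $\Lbar$-isomorphism $\psi\to\psi'$ of the rank-one modules is in particular an Ore polynomial $\lambda$ with $\lambda\psi_{\overline X}=\psi'_{\overline X}\lambda$, i.e. $\lambda\phi_X=\phi'_X\lambda$, so it is an $\Lbar$-isomorphism $\phi\to\phi'$. This shows the map on classes is well-defined and injective. For surjectivity I need: every $\Lbar$-isomorphism class in $\Dr_1(\A_\HH,L)$ arises this way. Given $\psi\in\Dr_1(\A_\HH,L)$, set $\phi_X\coloneqq\psi_{\overline X}$; this is an Ore polynomial of $\tau$-degree $2$ with constant term $\gamma(X)$, hence defines $\phi\in\Dr_2(\Fq[X],L)$. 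One must then show $\psi_{\overline Y}=\tau_L$ and that the characteristic polynomial of $\phi$ is $\xi$. Since $\psi_{\overline Y}\in\End(\phi)$ and satisfies $\psi_{\overline Y}^2+h(\phi_X)\psi_{\overline Y}-f(\phi_X)=\psi_{\xi(\overline X,\overline Y)}=\psi_0=0$, the element $\psi_{\overline Y}$ is a root in $\End(\phi)$ of the polynomial $\xi$ evaluated at $\phi_X$; I would argue via Proposition~\ref{prop:structure-end} (applied once we know $\phi$ is ordinary with $\xi$ as characteristic polynomial — or rather, we must bootstrap this) that the only elements of $\End(\phi)$ with zero constant term satisfying this quadratic are $\tau_L$ and its "conjugate", and degree/separability considerations single out $\tau_L$.

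The main obstacle I anticipate is the surjectivity argument, specifically pinning down that a given $\psi\in\Dr_1(\A_\HH,L)$ forces $\phi=(\psi_{\overline X})$ to be ordinary with characteristic polynomial exactly $\xi$ and forces $\psi_{\overline Y}=\tau_L$; this is a slightly delicate identification because a priori $\End(\phi)$ could be larger than $\Fq[\phi_X,\tau_L]$ before we have invoked ordinariness, and the quadratic $\xi$ over $\Fq[X]$ has two roots in $\A_\HH\otimes\Fq(X)$, so one must rule out the wrong root. I expect to resolve this by noting that $\psi_{\overline Y}$ has constant term $\gamma(\overline Y)=0$ (since $\overline Y\in\pid=\ker\gamma$) so it is inseparable, indeed of the form $\tau^{\ell}s$; matching $\tau$-degrees ($\deg_\tau\psi_{\overline Y}=\deg(\overline Y\,\A_\HH)=d=[L:\Fq]$, using rank one) and the fact that $\psi_{\overline Y}$ generates together with $\phi_X$ a commutative subring isomorphic to $\A_\HH$ whose fraction field is imaginary quadratic, one concludes $\psi_{\overline Y}$ must be the Frobenius $\tau_L=\tau^{[L:\Fq]}$ and $\xi$ its characteristic polynomial; the hypothesis that $h$ is not divisible by $p$ (equivalently $h\notin\pid$) then guarantees $\phi$ is ordinary via Definition~\ref{def:supersingular}, closing the loop and allowing a clean application of Proposition~\ref{prop:structure-end}. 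The remaining steps — well-definedness of both directions on isomorphism classes, and that the two constructions are mutually inverse — are then formal.
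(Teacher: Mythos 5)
The core idea of your proposal --- build both directions of the map, show well-definedness on $\Lbar$-isomorphism classes, and establish surjectivity via a structural analysis of $\psi\in\Dr_1(\A_\HH,L)$ --- matches the paper's strategy, but there is a genuine gap in the surjectivity argument, and your well-definedness argument glosses over the step that makes it work.

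\textbf{Surjectivity.} You claim that for a given $\psi\in\Dr_1(\A_\HH,L)$, one can conclude directly that $\psi_{\overline Y}=\tau_L$, invoking only the vanishing of the constant term, the $\tau$-degree count $\deg_\tau\psi_{\overline Y}=d$, and the commutativity of the subring generated by $\psi_{\overline X},\psi_{\overline Y}$. This is false as stated: these constraints only force $\psi_{\overline Y}=\beta\tau_L$ for some $\beta\in\Fq^\times$ (this is exactly the content of Lemma~\ref{lem:rk1_structure}, whose proof additionally determines $\beta^2=\alpha\Norm_{L/\Fq}(\Delta)$ and shows that the correct square root is pinned down by $(\Delta,g)$, but $\beta=1$ need not hold). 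The missing ingredient is Lemma~\ref{lem:corresp_surj}: one must twist $\psi$ by an $\Lbar$-isomorphism $\mu\in\Lbar^\times$ chosen so that $\Norm_{L/\Fq}(\mu)=\beta$ (and $\mu$ is a $(q-1)$th root of a suitable element), which replaces $\psi_{\overline Y}=\beta\tau_L$ by $\tau_L$. Without this normalization step, your construction $\phi_X\coloneqq\psi_{\overline X}$ does not obviously hit every $\Lbar$-isomorphism class in $\Dr_1(\A_\HH,L)$; you are only covering the classes where some representative happens to have $\beta=1$.

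\textbf{Well-definedness.} You note the twist $\lambda\tau_L\lambda^{-1}=\lambda^{1-q^d}\tau_L$ and then appeal to ``Frobenius is intrinsic'' to conclude $\lambda$ conjugates $\tau_L$ to itself. This is not automatic: it is precisely the statement $\lambda\in L$, and it requires an argument. The paper handles it in two parts. First, the map is extended to \emph{all} representatives $\phi'=\alpha\phi\alpha^{-1}$ of the $\Lbar$-class (even those whose characteristic polynomial is not $\xi$) by setting $\psi'_{\overline Y}=\alpha\tau_L\alpha^{-1}=\alpha^{1-q^d}\tau_L$, which lies in $L\{\tau\}$ because ordinariness forces $g\neq 0$, hence $\alpha^{q-1}\in L$ from the coefficient of $\tau$ in $\phi'_X$. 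Second, when $\phi'$ also has characteristic polynomial $\xi$, Proposition~\ref{prop:Lisog_Lbarisom} gives $\alpha\in L$ and hence $\alpha\tau_L\alpha^{-1}=\tau_L$, so the two recipes agree. Your argument should invoke Proposition~\ref{prop:Lisog_Lbarisom} here explicitly; the ordinariness hypothesis (and $h\notin\pid$, which forces $g\neq 0$) is essential and is what makes the Frobenius genuinely ``rigid'' under the permissible isomorphisms.
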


The proof of Proposition~\ref{prop:correspondence} is postponed to the end of this section. 

By using Proposition~\ref{prop:correspondence}, we can define the
\idef{$\j$-invariant} of a Drinfeld module in $\Dr_1(\A_\HH, L)$ as the
$\j$-invariant of its corresponding $\Lbar$-isomorphism class in $\Dr_2(\Fq[X],
L)$.

\begin{lemma}\label{lem:rk1_structure}
  Any rank-$1$ Drinfeld module $\phi\in\Dr_1(\A_\HH, L)$ has the following
  form:
  \[\begin{cases}
    \phi_{\overline X} = \Delta \tau^2 + g\tau +\gamma(\overline X)\\
    \phi_{\overline Y} = \beta\tau_L,
  \end{cases}\]
	where $\Delta\in L^\times$, $g\in L$, $\beta\in\Fq^\times$. Moreover, $\beta$
	is a nonzero square root of $\alpha \Norm_{L/\Fq}(\Delta)$ and it is
	uniquely determined by $\Delta$ and $g$.
\end{lemma}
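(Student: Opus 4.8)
The plan is to read the shapes of $\phi_{\overline X}$ and $\phi_{\overline Y}$ off the rank-one condition, then to pin down $\phi_{\overline Y}$ using that $\pid$ is the $\A_\HH$-characteristic, and finally to extract $\beta$ from the defining relation $\xi(\overline X,\overline Y)=0$ of $\A_\HH$. The first step is a degree count: $\A_\HH/(\overline X)\simeq\Fq[Y]/(Y^2+h(0)Y-f(0))$ has dimension $2$ over $\Fq$ and $\A_\HH/(\overline Y)\simeq\Fq[X]/(f)$ has dimension $\deg(f)=d$, so $\deg(\overline X)=2$ and $\deg(\overline Y)=d$; since $\phi$ has rank $1$, this gives $\degtau(\phi_{\overline X})=2$ and $\degtau(\phi_{\overline Y})=d=[L:\Fq]=\degtau(\tau_L)$. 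As the $\tau^0$-coefficient of $\phi_{\overline X}$ is $\gamma(\overline X)$ by Definition~\ref{d-def}, we get $\phi_{\overline X}=\Delta\tau^2+g\tau+\gamma(\overline X)$ with leading coefficient $\Delta\in L^\times$ and some $g\in L$.

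The heart of the argument is to show $\phi_{\overline Y}$ is purely inseparable, hence of the form $\beta\tau^d=\beta\tau_L$ with $\beta\in L^\times$ its leading coefficient; equivalently, $\Ker(\phi_{\overline Y})=\{0\}$ in $\Lbar$. I would first observe that $\overline Y\A_\HH$ is a power of $\pid$: any maximal ideal of $\A_\HH$ containing $\overline Y$ lies over a prime of $\Fq[X]$ dividing $\Norm_{\Fq(\HH)/\Fq(X)}(\overline Y)=-\alpha\,p(X)^m$, i.e.\ over $(p)$, and (using $p\nmid h$) one checks that $\overline Y\in\pid$ but $\overline Y$ lies in no other prime of $\A_\HH$ over $(p)$, so $\overline Y\A_\HH=\pid^{m}$. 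Consequently, if $\phi_{\overline Y}(x)=0$ then $\phi_a(x)=\phi_b(\phi_{\overline Y}(x))=0$ for every $a=\overline Y b\in\pid^{m}$, so $x$ lies in $\phi[\pid^m]:=\bigcap_{a\in\pid^m}\Ker(\phi_a)$. But $\phi$ has rank $1$ and $\A_\HH$-characteristic $\pid$, so its $\pid$-torsion vanishes (the $\pid$-torsion of a rank-$r$ Drinfeld module in characteristic $\pid$ is an $\A_\HH/\pid$-vector space of dimension at most $r-1$), whence $\phi[\pid^m]=0$ by an immediate induction on $m$, so $x=0$. Since a nonzero separable Ore polynomial has nonzero kernel, $\phi_{\overline Y}$ is purely inseparable and so $\phi_{\overline Y}=\beta\tau_L$. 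I expect this to be the main obstacle; the two nontrivial inputs are the identity $\overline Y\A_\HH=\pid^m$ and the triviality of the $\pid$-torsion of a rank-one module in characteristic $\pid$.

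It remains to analyse $\beta$. Applying $\phi$ to the relation $\overline Y^2+h(\overline X)\,\overline Y-f(\overline X)=0$ in $\A_\HH$ and writing $\psi$ for the rank-$2$ Drinfeld $\Fq[X]$-module with $\psi_X=\phi_{\overline X}$ (so $\phi_{h(\overline X)}=\psi_{h(X)}$, $\phi_{f(\overline X)}=\psi_{f(X)}$), substituting $\phi_{\overline Y}=\beta\tau_L$ and using $\beta^{q^d}=\beta$ yields $\beta^2\tau_L^2+\psi_{h(X)}\beta\tau_L-\psi_{f(X)}=0$. Here the middle term has $\tau$-degree $2\deg(h)+d\le 2d-1$ (as $\deg(h)\le(d-1)/2$ and $d$ is odd), while $\degtau(\beta^2\tau_L^2)=\degtau(\psi_{f(X)})=2\deg(f)=2d$, so comparing the $\tau^{2d}$-coefficients gives $\beta^2=\mathrm{lc}(\psi_{f(X)})$, the leading coefficient of $\psi_{f(X)}$. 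An easy induction gives $\mathrm{lc}(\psi_{X^n})=\Delta^{(q^{2n}-1)/(q^2-1)}$, and since $d$ is odd the residues $2i\bmod d$ for $0\le i\le d-1$ exhaust $\{0,\dots,d-1\}$, so $\sum_{i=0}^{d-1}q^{2i}\equiv\sum_{i=0}^{d-1}q^{i}\pmod{q^d-1}$ and hence $\mathrm{lc}(\psi_{f(X)})=\alpha\,\Delta^{(q^d-1)/(q-1)}=\alpha\Norm_{L/\Fq}(\Delta)$. Thus $\beta^2=\alpha\Norm_{L/\Fq}(\Delta)\in\Fq^\times$, and since $\beta\in L=\F_{q^d}$ with $\gcd(2,d)=1$, we get $\beta\in\Fq^\times$. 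Finally, if $\phi,\phi'\in\Dr_1(\A_\HH,L)$ share the same $\phi_{\overline X}$ (equivalently, the same $\Delta$ and $g$), then they share the same $\psi$, and with $\phi_{\overline Y}=\beta\tau_L$, $\phi'_{\overline Y}=\beta'\tau_L$ ($\beta,\beta'\in\Fq$) subtracting the two resulting relations gives $(\beta-\beta')\bigl((\beta+\beta')\tau_L+\psi_{h(X)}\bigr)\tau_L=0$; since $\Ltau$ is a domain, $\tau_L\ne 0$, and $(\beta+\beta')\tau_L+\psi_{h(X)}\ne 0$ (it has $\tau$-degree $d$ when $\beta+\beta'\ne 0$, and equals $\psi_{h(X)}\ne 0$ otherwise, using $h\ne 0$), we conclude $\beta=\beta'$, so $\beta$ is uniquely determined by $\Delta$ and $g$.
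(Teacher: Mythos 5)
Your proof is correct, and it takes a genuinely different route in the central step (showing $\phi_{\overline Y}$ is purely inseparable). The paper argues via the Ore-polynomial factorization $\phi_{\overline f}=\phi_{\overline Y}\phi_{\overline Y+\overline h}$: it invokes~\cite[(1.4), Eq.~(ii)]{gek91} to get $\tau^{d/m}\mid\phi_{\overline p}$, deduces that $\tau_L$ divides $\phi_{\overline f}=\alpha\phi_{\overline p}^m$, and then uses the separability of $\phi_{\overline Y+\overline h}$ (which holds since $p\nmid h$) together with height bookkeeping to force $\phi_{\overline Y}=\beta\tau_L$. You instead prove $\Ker(\phi_{\overline Y})=0$ directly: you first establish the ideal factorization $\langle\overline Y\rangle=\pid^m$ in $\A_\HH$ (by comparing the norm $-\alpha p^m$ with the splitting $\langle p\rangle=\pid\cdot\langle p,\overline Y+h\rangle$ and using $p\nmid h$ to exclude the conjugate prime), and then use that a rank-$1$ Drinfeld module in characteristic $\pid$ has trivial $\pid$-torsion, so $\phi[\pid^m]=0$. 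Your route avoids the explicit citation about the purely-inseparable factor of $\phi_{\overline p}$, at the cost of invoking the standard torsion-structure theorem; of course, both arguments ultimately rest on the same height phenomenon for reduction at $\pid$. Your computation of $\beta^2$ via the leading coefficient of $\psi_{f(X)}$ matches the paper's (you just spell out the exponent manipulation $\sum q^{2i}\equiv\sum q^i\pmod{q^d-1}$, which the paper summarizes as ``$\tau^2$ generates $\Gal(L/\Fq)$''), and your uniqueness argument by subtracting the two Frobenius relations handles the characteristic-$2$ and odd cases uniformly, which is a minor streamlining of the paper's two-case treatment.
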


\begin{proof}
	Since $\overline X$ has degree $2$ in $\A_\HH$ and $\phi$ has rank $1$,
	$\phi_{\overline X}$ must be an Ore polynomial of $\tau$-degree $2$.
	Therefore, $\phi_{\overline X} = \Delta \tau^2 + g\tau +\gamma(\overline X)$
	for some $\Delta\in L^\times$, $g\in L$.

	Next, we show that $\phi_{\overline Y} = \beta \tau_L$ for some
	$\beta\in\Fq^\times$. We start by noticing that since $\phi$ has rank $1$ and
	$\overline Y$ has degree $d$, we must have $\deg_\tau(\phi_{\overline Y}) =
	d$. As $\phi_{\overline{p}}$ has
	constant coefficient zero, \cite[(1.4), Eq.~(ii)]{gek91} implies that
	$\tau^{d/m}$ right-divides $\phi_{\overline{p}}$. Therefore
	$\phi_{\overline f} = \alpha \phi_{\overline p}^m$ is right-divisible by
	$\tau^d = \tau_L$. Since $\overline f$ has degree $2d$ and $\phi$ has rank
	$1$, this implies that $\phi_{\overline f} = w \tau^d$ for some $w\in \Ltau$
	of $\tau$-degree $d$, and consequently $\phi_{\overline Y}\phi_{\overline Y +
	\overline h} = \phi_{\overline f} = w \tau_L$. Since $h$ is not divisible by
	$p$, $\overline Y+\overline h\notin\pid$ and therefore $\phi_{\overline Y +
	\overline h}$ is separable. Consequently, $\phi_{\overline Y + \overline h} =
	w/\beta$ for some $\beta\in L^\times$ and $\phi_{\overline Y} = \beta\tau_L$.

	By examinating the coefficient of $\tau^{2d}$ in the equation
	$\phi_{\overline Y}^2 + \phi_{\overline Y}\phi_{\overline h} =\phi_{\overline
	f}$ we obtain that $\beta^2 = \alpha \Norm_{L/\Fq}(\Delta)$ (as $[L:\Fq]$ is
	odd, $\tau^2$ is a generator of $\Gal(L/\Fq)$). Since $d$ is odd, there is no
	subfield of $L$ of degree $2$ over $\Fq$, and hence $\beta\in\Fq^\times$. We
	then prove that only one square root $\beta$ of $\alpha \Norm_{L/\Fq}(\Delta)$
	is suitable. If $q$ is a power of $2$, then there is only one square root.
	Therefore, let us assume now that $q$ is odd, and let $\pm \delta$ be the two
	distinct square roots of $\alpha \Norm_{L/\Fq}(\Delta)$. By contradiction,
	assume that there exists Drinfeld modules $\psi,\psi'\in\Dr_1(\A_\HH, L)$ such
	that $\psi_{\overline X} = \psi'_{\overline X} = \Delta \tau^2 + g +
	\gamma(\overline X)$ and $\psi_{\overline Y} = \delta\tau_L$, $\psi_{\overline
	Y} = -\delta\tau_L$. Then $0 = \psi_{\overline Y^2 + \overline h \overline
	Y-\overline f} - \psi'_{\overline Y^2 + \overline h \overline Y-\overline f} =
	2\delta \psi'_h \tau_L = 0$, which contradicts the fact that $h\ne 0$.
\end{proof}

\begin{lemma}\label{lem:corresp_surj}
	Any $\phi\in\Dr_1(\A_\HH, L)$ is $\Lbar$-isomorphic to a Drinfeld module
	$\psi\in\Dr_1(\A_\HH, L)$ such that $\psi_{\overline Y} = \tau_L$.
\end{lemma}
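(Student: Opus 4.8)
The plan is to realize the desired $\psi$ as a conjugate of $\phi$ by a suitable scalar $\lambda\in\Lbar^\times$, and then to choose $\lambda$ carefully so that this conjugate is actually defined over $L$ and not merely over $\Lbar$. First I would apply Lemma~\ref{lem:rk1_structure} to write $\phi_{\overline X}=\Delta\tau^2+g\tau+\gamma(\overline X)$ and $\phi_{\overline Y}=\beta\tau_L$ with $\Delta\in L^\times$, $g\in L$, $\beta\in\Fq^\times$. Since an $\Lbar$-isomorphism of Drinfeld modules is simply an element $\lambda\in\Lbar^\times$, conjugation by $\lambda$ produces a Drinfeld $\A_\HH$-module $\psi$ over $\Lbar$ (of the same rank $1$, with the same $\tau^0$-coefficients, so the defining properties are preserved), determined by $\psi_{\overline X}=\lambda\phi_{\overline X}\lambda^{-1}$ and $\psi_{\overline Y}=\lambda\phi_{\overline Y}\lambda^{-1}$.

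Next, using the commutation rule $\tau^i\mu=\mu^{q^i}\tau^i$ for $\mu\in\Lbar$, and recalling $\tau_L=\tau^d$ with $d=[L:\Fq]$, I would compute
\[ \psi_{\overline X}=\Delta\,\lambda^{1-q^2}\,\tau^2+g\,\lambda^{1-q}\,\tau+\gamma(\overline X),\qquad \psi_{\overline Y}=\beta\,\lambda^{1-q^d}\,\tau_L. \]
Thus $\psi_{\overline Y}=\tau_L$ holds exactly when $\lambda^{q^d-1}=\beta$, and $\psi$ is defined over $L$ as soon as $\lambda^{q-1}\in L$, since $\lambda^{1-q^2}=(\lambda^{1-q})^{1+q}$, so both nontrivial coefficients of $\psi_{\overline X}$ then lie in $L$.

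This reduces the lemma to a single algebraic existence statement. Setting $\mu:=\lambda^{q-1}$, one has $\lambda^{q^d-1}=\mu^{1+q+\cdots+q^{d-1}}=\Norm_{L/\Fq}(\mu)$ whenever $\mu\in L$. Since the norm map $\Norm_{L/\Fq}\colon L^\times\to\Fq^\times$ of an extension of finite fields is surjective, I can pick $\mu\in L^\times$ with $\Norm_{L/\Fq}(\mu)=\beta$, and then take $\lambda$ to be any $(q-1)$-th root of $\mu$ in $\Lbar^\times$, which exists because $\Lbar$ is algebraically closed. For this $\lambda$, the conjugate $\psi$ has coefficients in $L$, lies in $\Dr_1(\A_\HH,L)$, is $\Lbar$-isomorphic to $\phi$, and satisfies $\psi_{\overline Y}=\tau_L$, which is the claim. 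The only genuinely delicate point is making sure $\psi$ descends to $L$ rather than staying over $\Lbar$; this is precisely what the substitution $\mu=\lambda^{q-1}$ together with surjectivity of the finite-field norm handles, and the remaining verifications (that $\psi$ is a rank-$1$ Drinfeld $\A_\HH$-module, being conjugate to one) are routine.
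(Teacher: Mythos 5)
Your proof is correct and follows essentially the same route as the paper: pick $\mu\in L^\times$ with $\Norm_{L/\Fq}(\mu)=\beta$ by surjectivity of the finite-field norm, take a $(q-1)$th root $\lambda$ of $\mu$ in $\Lbar^\times$, and conjugate by $\lambda$. Your writeup is in fact slightly more explicit than the paper's (which leaves the conjugation computation as ``direct computations'' and even has a typo, writing $\psi_a=\mu\phi_a\mu^{-1}$ where it should be $\lambda\phi_a\lambda^{-1}$); your explanation of why $\psi$ descends to $L$ via $\lambda^{q-1}\in L$ is exactly the missing detail.
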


\begin{proof}
	By Lemma~\ref{lem:rk1_structure}, $\phi_{\overline Y} = \beta \tau_L$ for some
	$\beta\in\Fq^\times$. Let $\mu\in L^\times$ be an element such that
	$\Norm_{L/\Fq}(\mu)=\beta$ and let $\lambda\in\Lbar^\times$ be a
	$(q-1)$th-root of $\mu$. Then $\lambda^{q^d-1} =
	(\lambda^{q-1})^{1+q+q^2+\dots+q^{d-1}} = \Norm_{L/\Fq}(\mu) = \beta$.
	Direct computations show that the Drinfeld module $\psi\in\Dr_1(\A_\HH, L)$
	defined for all $a\in \A_\HH$ by $\psi_a = \mu\phi_a\mu^{-1}$ satisfies the
	desired property.
\end{proof}

\begin{proof}[Proof of Proposition~\ref{prop:correspondence}]
	To a Drinfeld module $\phi\in\Dr_2(\Fq[X], L)$ with characteristic polynomial
	$\xi$, we associate a Drinfeld module $\psi\in\Dr_1(\A_\HH, L)$ defined by
	$\psi_{\overline X} = \phi_X$ and $\psi_{\overline Y} = \tau_L$.

	Let $\phi' = \alpha\phi\alpha^{-1}\in\Dr_2(\Fq[X], L)$, $\alpha\in\Lbar$, be a
	Drinfeld module $\Lbar$-isomorphic to $\phi$. Note that the characteristic polynomial of
	the Frobenius endomorphism of $\phi'$ need not be $\xi$. We prove that $\psi'$
	defined by $\psi'_{\overline X} = \phi'_{X}$ and $\psi'_{\overline Y} =
	\alpha\tau_L\alpha^{-1}= \alpha^{1-q^{d}}\tau_L$ is a Drinfeld module
  in $\Dr_1(\A_\HH, L)$. Writing $\phi_X = \Delta\tau^2 + g\tau+\gamma(X)$, we must
	have $g\ne 0$ since otherwise $\phi$ would have $\j$-invariant $0$; $\phi$
	would be supersingular~\cite[Lem.~3.2]{bae1992singular}, which contradicts our
	assumption that $h$ is not divisible by $p$ (see
	Definition~\ref{def:supersingular}). Since the coefficient of $\tau$ in
	$\phi'_{X}$ equals $\alpha^{q-1} g$ and is in $L$, we obtain that
	$\alpha^{q-1}\in L$. Then $\alpha^{1-q^{d}}\in L$ as a power of
	$\alpha^{q-1}\in L$. Therefore, $\psi'\in\Dr_1(\A_\HH,L)$. Notice that if
  $\xi(\phi'_X, \tau_L) = 0$, then
  $\alpha\in L$ (Proposition~\ref{prop:Lisog_Lbarisom}), so that
	$\alpha\tau_L\alpha^{-1} = \tau_L$.
	The Drinfeld modules $\psi$ and $\psi'$ are $\Lbar$-isomorphic, and
	we extend our association to a well-defined map from the set
	$\Lbar$-isomorphism classes of Drinfeld modules in $\Dr_2(\Fq[X], L)$
	containing a representative whose characteristic polynomial of the
	Frobenius endomorphism is $\xi$, to the set of $\Lbar$-isomorphism classes of
	Drinfeld modules in $\Dr_1(\A_\HH, L)$. It remains to prove that this map is
	bijective. Injectivity comes easily and surjectivity is a direct consequence
	of Lemma~\ref{lem:corresp_surj}.
\end{proof}

\subsection{A group action from class field theory}\label{subsec:class::action}

Our object of study is a group action of $\Cl(\A)$ on the set of
isomorphism classes of $\Dr_r(\A, K)$, where $\A$ and $K$ are as in
Section~\ref{subsec:preli::drinfeld-modules}. Indeed, if $\mathfrak
a\subset\A$ is a nonzero ideal, we define \[\iota_\aid = \rgcd\left(\{\phi_f : f \in
\mathfrak a\}\right).\] By \cite[Sec.~4]{hayes}, $\iota_\aid$ is a well-defined
isogeny from $\phi$ to some Drinfeld module  denoted
$\aid\star_K \phi\in\Dr_r(\A, K)$. 
This map actually has multiplicative
properties, and principal ideals lead to isogenies that are endomorphisms.
Therefore this map can be extended to a group action of $\Cl(\A)$ on the set of
$K$-isomorphism classes of Drinfeld modules in $\Dr_r(\A, K)$. 
A similar group action in fact appears to be one of the main motivations in the
landmark paper by Drinfeld for making explicit the class field theory of
function fields, see~\cite[Th.~1]{drinfel1974elliptic}.

\smallskip

\DeclareObjectsAsIn

\begin{theorem}\label{thm:grp-action}
	If $\Dr_1(\A_\HH, L)$ is nonempty, then the set of $\Lbar$-isomorphism classes
	of Drinfeld modules in $\Dr_1(\A_\HH, L)$ is a principal homogeneous space for
	$\Cl(\A_\HH)$ under the $\star_L$ action.
\end{theorem}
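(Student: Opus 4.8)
The plan is to establish the three defining properties of a principal homogeneous space (PHS): the $\star_L$ action is well-defined on $\Lbar$-isomorphism classes, it is transitive, and it is free (equivalently, once transitivity is known, that the set and the group have the same cardinality, or that stabilizers are trivial). I would first recall from the discussion preceding the theorem that for a nonzero ideal $\aid\subset\A_\HH$, the Ore polynomial $\iota_\aid = \rgcd(\{\phi_f : f\in\aid\})$ is an isogeny $\phi\to\aid\star_L\phi$, that $(\aid\bid)\star_L\phi \cong \aid\star_L(\bid\star_L\phi)$, and that principal ideals act by $L$-isomorphisms; hence $\star_L$ descends to an action of $\Cl(\A_\HH)$ on $L$-isomorphism classes. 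The first task is to check this action is compatible with the coarser equivalence of $\Lbar$-isomorphism: if $\psi = \lambda\phi\lambda^{-1}$ for $\lambda\in\Lbar^\times$, then $\lambda\iota_\aid\lambda^{-1}$ is the corresponding $\rgcd$ for $\psi$, so $\aid\star_L\psi = \lambda(\aid\star_L\phi)\lambda^{-1}$; this is a routine functoriality check. So the action of $\Cl(\A_\HH)$ on the set $S$ of $\Lbar$-isomorphism classes is well-defined.

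Next I would prove transitivity. Let $\phi,\psi\in\Dr_1(\A_\HH,L)$; by Lemma~\ref{lem:separable-exists} (which applies since $L$ is finite and $\phi,\psi$ have the same characteristic polynomial, hence are isogenous — here I would need the rank-one analog of the "same characteristic polynomial $\Rightarrow$ isogenous" fact, or more simply invoke that all rank-one Drinfeld $\A_\HH$-modules over $L$ lie in a single isogeny class because the Frobenius $\tau_L$ corresponds to $\overline Y\in\A_\HH$ for every one of them by Lemma~\ref{lem:rk1_structure} and Lemma~\ref{lem:corresp_surj}), there exists a separable isogeny $u:\phi\to\psi$. Its kernel $\ker(u)$ is a finite $\A_\HH$-submodule of $\Lbar$; set $\aid = \chi(\ker u)$, the Euler-Poincaré characteristic. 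One then shows $\ker(u) = \ker(\iota_\aid)$ as $\A_\HH$-submodules — this is the heart of the argument and mirrors the elliptic-curve case: the submodule of $\Lbar$ annihilated by $\aid$ under the $\phi$-action is exactly $\ker(\iota_\aid)$ by definition of the $\rgcd$, and one checks $\ker(u)$ is annihilated by $\aid$ and has the right cardinality $\#(\A_\HH/\aid)$. Since both $u$ and $\iota_\aid$ are separable with the same kernel, they differ by a unit (left multiplication by an element of $\Lbar^\times$), so $\psi \cong_{\Lbar} \aid\star_L\phi$. This gives transitivity.

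For freeness, suppose $\aid\star_L\phi \cong_{\Lbar}\phi$; I want to show $\aid$ is principal in $\A_\HH$. The $\Lbar$-isomorphism $\lambda:\phi\to\aid\star_L\phi$ pulls back to an endomorphism $\lambda^{-1}\iota_\aid\in\End_{\Lbar}(\phi)$. By Proposition~\ref{prop:structure-end} applied to the corresponding rank-two module (via the bijection of Proposition~\ref{prop:correspondence}, noting that all hypotheses — $d$ odd, $h$ not divisible by $p$ so the module is ordinary, $\xi$ defining an imaginary hyperelliptic curve — are in force by the standing assumptions of Section~\ref{subsec:correspondence}), we have $\End_{\Lbar}(\phi) = \A_\HH$ acting via $a\mapsto\phi_a$; thus $\iota_\aid = \phi_c$ for some $c\in\A_\HH$ up to an $\Lbar^\times$ scalar, and comparing kernels, $\aid = c\A_\HH$ is principal. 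Conversely principal ideals act trivially on $L$-isomorphism classes, a fortiori on $\Lbar$-isomorphism classes. Hence the stabilizer of every point is trivial and, combined with transitivity, $S$ is a PHS for $\Cl(\A_\HH)$.

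The main obstacle I anticipate is the identification $\ker(u)=\ker(\iota_\aid)$ together with the cardinality bookkeeping $\#\ker(u) = \#(\A_\HH/\chi(\ker u))$ in the setting where $\A_\HH$ is not a PID and the characteristic ideal $\pid$ may intervene through heights; one must be careful that $u$ being separable kills the height contribution and that the Euler-Poincaré characteristic genuinely computes the module-theoretic size, which is exactly what properties (i)--(iii) of $\chi$ together with Lemma~\ref{lem:separable-exists} are designed to supply. The freeness step is then comparatively clean once Proposition~\ref{prop:structure-end} is invoked, since maximality of $\End_{\Lbar}(\phi) = \A_\HH$ is precisely what forces stabilizers into the group of principal ideals.
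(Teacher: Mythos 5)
Your proposal takes a genuinely different route from the paper's. The paper proves Theorem~\ref{thm:grp-action} by lifting to characteristic zero: it uses the correspondence of Proposition~\ref{prop:correspondence} and Deuring's lifting theorem for Drinfeld modules to produce Drinfeld modules over a number-field analog, invokes the global principal-homogeneous-space theorem of Hayes (Theorem~\ref{thm:classfieldFF}), and descends via the reduction compatibility of Lemma~\ref{lem:Pidstab}. You instead argue entirely in the finite setting: transitivity by matching kernels of separable isogenies with the torsion $\ker\iota_\aid$, and freeness by pushing the isogeny $\iota_\aid$ into $\End_{\Lbar}(\phi)\simeq\A_\HH$ via Proposition~\ref{prop:structure-end}. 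Both strategies are legitimate, and yours is arguably more elementary and computational (and indeed closer in spirit to the ideal--isogeny dictionary the paper only develops later in Lemma~\ref{lem:corres_id_isog}).

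However there is a concrete gap in the transitivity step, and it is not repaired by the ingredients you point to. You set $\aid=\chi(\ker u)$ and then claim $\ker u\subset\ker\iota_\aid$ because ``$\ker u$ is annihilated by $\aid$.'' This is false for a general finite $\A_\HH$-module: the Euler--Poincar\'e characteristic need not annihilate the module (for instance $\chi(\A_\HH/\mathfrak q\oplus\A_\HH/\mathfrak q)=\mathfrak q^2$ while the annihilator is $\mathfrak q$). What saves the argument is that $\ker u$ is in fact \emph{cyclic} as an $\A_\HH$-module, which makes $\chi(\ker u)$ coincide with the annihilator; but this cyclicity is a substantive structural fact about rank-one Drinfeld modules (for $\aid$ coprime to the characteristic, $\phi[\aid]\cong\A_\HH/\aid$, and a separable isogeny kernel sits inside such a torsion module, hence is cyclic by the Dedekind-module structure of $\A_\HH/\aid$). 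Neither properties (i)--(iii) of $\chi$ nor Lemma~\ref{lem:separable-exists}, which you cite as ``exactly what is designed to supply'' the bookkeeping, yield this; they only give $\#\ker u=\#(\A_\HH/\chi(\ker u))$, which without the annihilation statement does not identify $\ker u$ with $\ker\iota_\aid$. The same cyclicity fact is also silently used in your freeness step when you pass from $\ker\iota_\aid=\ker\phi_c$ to $\aid=(c)$. Once that lemma is stated and proved (or properly cited), your direct argument closes and gives a self-contained proof avoiding the lifting machinery.
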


The proof of Theorem~\ref{thm:grp-action} is postponed to the end of this section.

Theorem~\ref{thm:grp-action} can be seen as a reduction modulo prime ideals of the following
general theorem, which might itself be seen as an function field analog of \cite[Prop.~2.4,
Lem.~2.5.1]{silverman1994advanced}. We emphasize that \cite[Th.~9.3]{hayes} holds in greater generality than what we need here; it holds for any
function field and it is not restricted to hyperelliptic curves.

\begin{theorem}[{\cite[Th.~9.3]{hayes}}]\label{thm:classfieldFF}
	Let $\k$ be the function field of $\HH$, $\K$ be the completion of $\k$ at the
	place $\infty$, and $\C$ be the completion of an algebraic closure $\overline
	\K$. Then the set of $\C$-isomorphism classes of Drinfeld modules in
	$\Dr_1(\A_\HH, \C)$ is a principal homogeneous space for $\Cl(\A_\HH)$ under
	the $\star_{\C}$ action.
\end{theorem}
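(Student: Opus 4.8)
The plan is to prove the statement by invoking Drinfeld's \emph{analytic uniformization} of rank-one Drinfeld modules over $\C$, which here plays exactly the role that the theory of complex tori plays for elliptic curves over $\mathbb{C}$. To a rank-one $\A_\HH$-lattice $\Lambda\subset\C$ --- a discrete, finitely generated, rank-one projective $\A_\HH$-submodule of $\C$ --- one attaches the entire, $\Fq$-linear, surjective exponential $e_\Lambda(z)=z\prod_{0\neq\lambda\in\Lambda}(1-z/\lambda)$, whose kernel is exactly $\Lambda$; there is then a unique $\phi^\Lambda\in\Dr_1(\A_\HH,\C)$ with $e_\Lambda(az)=\phi^\Lambda_a(e_\Lambda(z))$ for all $a\in\A_\HH$, and $\Lambda\mapsto\phi^\Lambda$ induces a bijection between homothety classes of rank-one $\A_\HH$-lattices in $\C$ and $\C$-isomorphism classes in $\Dr_1(\A_\HH,\C)$, a nonzero morphism $\phi^\Lambda\to\phi^{\Lambda'}$ corresponding to a $c\in\C^\times$ with $c\Lambda\subseteq\Lambda'$. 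I would take this (together with its analytic underpinnings: completeness and algebraic closedness of $\C$, convergence of the Weierstrass-type product) as the backbone; the rest is a translation of ideal-theoretic data into this picture.

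\textbf{Step 1: homothety classes of lattices $\leftrightarrow$ $\Cl(\A_\HH)$.} Given a rank-one lattice $\Lambda\subset\C$, the $\k$-span $\k\cdot\Lambda$ is a one-dimensional $\k$-subspace of $\C$, say $\k\cdot\omega$, and $\omega^{-1}\Lambda\subset\k$ is a finitely generated rank-one $\A_\HH$-submodule of $\k$, i.e. a fractional $\A_\HH$-ideal; hence every rank-one lattice is homothetic to a fractional ideal. Conversely, every fractional ideal of $\A_\HH$ is discrete in $\C$: since $\A_\HH$ is the ring of functions of $\HH$ regular away from $\infty$, the spaces $\{\lambda : v_\infty(\lambda)\geq -n\}$ are finite-dimensional over $\Fq$ by Riemann--Roch, so $\A_\HH$ (hence any fractional ideal) is a lattice in the completion $\K$ and thus in $\C$. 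Finally, if fractional ideals $\aid,\mathfrak b\subset\k$ satisfy $\mathfrak b=c\aid$ with $c\in\C^\times$, then for $0\neq a\in\aid$ we get $ca\in\mathfrak b\subset\k$, so $c\in\k^\times$; therefore two fractional ideals are homothetic in $\C$ iff they differ by a principal ideal, and homothety classes of rank-one lattices are canonically in bijection with $\Cl(\A_\HH)$. In particular $\Dr_1(\A_\HH,\C)\ne\varnothing$ (take $\Lambda=\A_\HH$), so the hypothesis of the theorem is automatic here.

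\textbf{Step 2: matching $\star_\C$ with ideal multiplication.} For a nonzero ideal $\aid\subset\A_\HH$ and $\phi\in\Dr_1(\A_\HH,\C)$, write $\iota_\aid=\rgcd\{\phi_f:f\in\aid\}$ as in the definition of the action. Each $\phi_f$ is right-divisible by $\iota_\aid$, and by Noetherianity $\iota_\aid$ is a left $\C\{\tau\}$-combination of finitely many $\phi_{f_i}$; hence $\Ker\iota_\aid=\bigcap_{f\in\aid}\Ker\phi_f=:\phi[\aid]$. Taking $\phi=\phi^\Lambda$ with $\Lambda=\omega\mathfrak b$ for a fractional ideal $\mathfrak b$, and using $\Ker\phi^\Lambda_f=e_\Lambda(f^{-1}\Lambda)$ together with surjectivity of $e_\Lambda$, one computes $\phi^\Lambda[\aid]=e_\Lambda\big(\{z\in\C:\aid z\subseteq\Lambda\}\big)=e_\Lambda(\omega\,\aid^{-1}\mathfrak b)$, where $\aid^{-1}\mathfrak b=(\mathfrak b:\aid)$ is again a fractional ideal (here $\A_\HH$ is Dedekind, being the integral closure of $\Fq[X]$ in $\k$, so $\aid$ is invertible) containing $\Lambda$ with finite index. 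By the uniformization dictionary, $\iota_\aid$ is then, up to the harmless scalar implicit in the monic normalization, the canonical isogeny $\phi^{\omega\mathfrak b}\to\phi^{\omega\aid^{-1}\mathfrak b}$; hence $\aid\star_\C\phi^{\omega\mathfrak b}$ lies in the homothety class of $\omega\aid^{-1}\mathfrak b$. Passing to classes, $\star_\C$ becomes the map $([\aid],[\mathfrak b])\mapsto[\aid]^{-1}[\mathfrak b]$ of the abelian group $\Cl(\A_\HH)$ on itself (the inversion is an artifact of the convention used to define $\iota_\aid$), which is the regular action and is therefore simply transitive. Combined with the already-established fact that $\star_\C$ is a group action, this shows that the set of $\C$-isomorphism classes in $\Dr_1(\A_\HH,\C)$ is a principal homogeneous space for $\Cl(\A_\HH)$.

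\textbf{Main obstacle.} The substantial input is Step~0: the analytic uniformization theorem itself --- convergence and $\Fq$-linearity of $e_\Lambda$, its surjectivity, that $\phi^\Lambda$ is a genuine rank-one Drinfeld module, and the resulting equivalence of categories --- which rests on non-archimedean function theory over the complete algebraically closed field $\C$. If one is willing to cite it (Drinfeld, or \cite{hayes}), the only remaining work is the Riemann--Roch/strong-approximation fact that fractional ideals of $\A_\HH$ are discrete in $\C$, and the purely ideal-theoretic bookkeeping identifying $\Ker\iota_\aid$ with the sublattice $e_\Lambda(\omega\,\aid^{-1}\mathfrak b)$; notably neither step uses the hyperelliptic hypothesis, in agreement with the fact that \cite[Th.~9.3]{hayes} holds for arbitrary function fields.
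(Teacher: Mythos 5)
This statement is imported verbatim from \cite[Th.~9.3]{hayes}; the paper gives no proof of its own, so there is nothing internal to compare against. Your sketch reconstructs the standard argument behind the cited result: Drinfeld/Hayes prove exactly this via the analytic uniformization of rank-one Drinfeld modules over $\C$ by rank-one $\A_\HH$-lattices, the identification of homothety classes of such lattices with $\Cl(\A_\HH)$, and the computation $\Ker(\iota_{\aid})=e_\Lambda(\aid^{-1}\Lambda)$ showing that $\aid\star_\C$ corresponds to $\Lambda\mapsto\aid^{-1}\Lambda$, whence the regular (hence simply transitive) action of the class group. The two ingredients you isolate as needing verification are the right ones, and your arguments for them are sound: discreteness of fractional ideals follows from Riemann--Roch at $\infty$, and the identification of $\Ker(\iota_\aid)$ with $\bigcap_{f\in\aid}\Ker(\phi_f)$ is correct since $\iota_\aid$ both right-divides each $\phi_f$ and is a left $\C\{\tau\}$-combination of finitely many of them (and every $\phi_f$ is separable here because $\gamma$ is injective in generic characteristic, so kernels do detect right-divisibility). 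Your remark that the hyperelliptic hypothesis plays no role matches the paper's own comment to that effect. The one honest caveat, which you state yourself, is that essentially all the analytic content lives in the uniformization theorem you take as a black box; so your ``proof'' is really a reduction to Drinfeld's uniformization, which is entirely consistent with the paper's decision to cite Hayes rather than reprove the result.
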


Our strategy to prove Theorem~\ref{thm:grp-action} is to use the reduction and
lifting properties of ordinary Drinfeld modules
\cite[Sec.~11]{hayes}\cite[Th.~3.4]{bae1992singular}.

Let $K$ be a finite extension of $\k$. Let $\Pid$ be a place of $K$ above
$\pid\subset \A_\HH$ and $\OP$ be the associated discrete valuation ring, with
the associated reduction morphism $\red_\Pid:\OP\onto \OP/\Pid$. An Ore
polynomial $f\in K\{\tau\}$ is said to be \idef{defined over $\OP$} if its
coefficients lie in $\OP$ and its leading coefficient is invertible in $\OP$. A
Drinfeld $\A_\HH$-module $\phi$ over $K$ is said to be \idef{defined over $\OP$}
if for all $a\in \A_\HH$,  $\phi_a$ is defined over $\OP$. Let
$\Dr_{r,\Pid}(\A_\HH, K)$ be the set of Drinfeld modules defined over $\OP$. By
considering the morphism $\gamma:\A_\HH\onto\A_\HH/\pid \into \OP/\Pid$, the
reduction map $\red_\Pid$ extends canonically to a map $\Dr_{r,\Pid}(\A_\HH,
K)\to\Dr_{r}(\A_\HH, \OP/\Pid)$.
\begin{lemma}\label{lem:Pidstab}
	For any $\phi\in\Dr_{r, \Pid}(\A_\HH, K)$ and any ideal $\mathfrak a\subset
	\A_\HH$, the Drinfeld module $\aid\star_K\phi$ is defined over $\OP$ and
	\[\red_\Pid(\mathfrak a\,\star_K\, \phi) = \mathfrak a\,\star_{(\OP/\Pid)}\,
	\red_\Pid(\phi).\]
\end{lemma}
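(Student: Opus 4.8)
The plan is to reduce the statement to a verification at the level of the defining Ore polynomial $\phi_X$, since a Drinfeld $\A_\HH$-module $\phi$ is completely determined by the pair $(\phi_{\overline X}, \phi_{\overline Y})$, and in fact (by Lemma~\ref{lem:rk1_structure}) by $\phi_{\overline X} = \Delta\tau^2 + g\tau + \gamma(\overline X)$ together with the constraint on $\phi_{\overline Y} = \beta\tau_L$. The key observation is that the isogeny $\iota_\aid = \rgcd(\{\phi_a : a\in\aid\})$ is built purely by right-gcd computations inside $K\{\tau\}$, and these are insensitive to passing from $K$ to $\OP$ as long as everything in sight is defined over $\OP$. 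So the two things I must check are: (i) $\iota_\aid$ is itself defined over $\OP$, i.e. its coefficients lie in $\OP$ and its leading coefficient is a unit; and (ii) the target Drinfeld module $\aid\star_K\phi$ has all its structure Ore polynomials defined over $\OP$, and its reduction agrees with $\aid\star_{(\OP/\Pid)}\red_\Pid(\phi)$.

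First I would fix generators $a_1,\dots,a_n$ of $\aid$ as an ideal of $\A_\HH$, so that $\iota_\aid = \rgcd(\phi_{a_1},\dots,\phi_{a_n})$. Each $\phi_{a_i}$ is defined over $\OP$ by hypothesis. Then I would invoke the fact that $\OP$ is a discrete valuation ring — hence $\OP\{\tau\}$ is a ring in which one can run the right-Euclidean algorithm whenever the divisor has unit leading coefficient — to argue that the monic $\rgcd$ of finitely many Ore polynomials with coefficients in $\OP$ and unit leading coefficients again has coefficients in $\OP$. This is the Gauss-lemma-style point: reducing modulo $\Pid$ commutes with the right-gcd computation, because at each Euclidean step the leading coefficient being divided by is a unit in $\OP$, so no denominators are introduced. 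Concretely, $\red_\Pid$ is a ring homomorphism $\OP\{\tau\}\to(\OP/\Pid)\{\tau\}$, and it sends the left-ideal generated by $\phi_{a_1},\dots,\phi_{a_n}$ onto the left-ideal generated by their reductions; since the monic generator is unique on both sides, $\red_\Pid(\iota_\aid) = \rgcd(\red_\Pid(\phi_{a_1}),\dots,\red_\Pid(\phi_{a_n})) = \iota_\aid(\red_\Pid(\phi))$. The height-related subtlety (that $\iota_\aid$ may fail to be separable) does not obstruct this, since the non-separable part is $\tau^\ell$ with $\ell$ a multiple of $\deg(\pid)$, which is still defined over $\OP$.

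Next I would handle the target module. By the Vélu-type formula \eqref{eq:Velu}, the coefficients $\mu_i$ of $(\aid\star_K\phi)_a$ are computed from the coefficients of $\iota_\aid$ and of $\phi_a$ by the recurrence, where the only division is by powers of the constant term $\iota_0$ of $\iota_\aid$ when $\iota_\aid$ is separable; when $\iota_\aid$ is not separable one first strips off the $\tau^\ell$ factor. So I need $\iota_0\in\OP^\times$ in the separable case. This follows because $\iota_\aid$ being separable means $\iota_0\ne 0$, and one checks $\iota_0\in\OP^\times$ by examining the constant term: $\iota_\aid$ right-divides $\phi_{a}$ for a suitable $a\in\aid$ not in $\pid$ (such $a$ exists unless $\aid\subseteq\pid$, a case one treats by factoring through $\tau^{\deg\pid}$), and $\phi_a$ has constant term $\gamma(a)$, which is a unit in $\OP/\Pid$ — so $\iota_0$, dividing a unit, is a unit modulo $\Pid$, hence a unit in $\OP$. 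With $\iota_\aid$ defined over $\OP$ and $\iota_0$ a unit, formula \eqref{eq:Velu} shows all $\mu_i\in\OP$, and the leading coefficient $\mu_{2}$ (for $a = \overline X$) is a unit since $\Delta$ is and the $\rgcd$ leading coefficient is $1$; so $\aid\star_K\phi$ is defined over $\OP$. Applying $\red_\Pid$ to \eqref{eq:Velu} and using that it is a ring homomorphism gives $\red_\Pid((\aid\star_K\phi)_a) = (\aid\star_{(\OP/\Pid)}\red_\Pid(\phi))_a$ for all $a\in\A_\HH$, which is exactly the claimed identity.

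The main obstacle I anticipate is the careful bookkeeping around separability and height: making precise that $\red_\Pid$ commutes with $\rgcd$ when leading coefficients are units, and correctly isolating the purely inseparable factor $\tau^\ell$ (with $\deg(\pid)\mid\ell$) so that the Vélu recurrence applies to a genuinely separable isogeny. None of this is deep, but it requires stating the right lemma about right-Euclidean division over the valuation ring $\OP$ and checking that the Vélu formula behaves well under reduction; the rest is formal manipulation with the fact that $\red_\Pid:\OP\{\tau\}\to(\OP/\Pid)\{\tau\}$ is a ring homomorphism compatible with $\gamma$.
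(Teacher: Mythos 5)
Your high-level plan is the same as the paper's — show that the isogeny $\iota_\aid$ is defined over $\OP$ and that its reduction modulo $\Pid$ is the isogeny computed for $\red_\Pid(\phi)$ — but you try to re-derive everything from scratch, while the paper simply cites \cite[Prop.~11.2]{hayes} for the fact that $\aid\star_K\phi$ is defined over $\OP$ and then notes that $\iota_\aid$ is forced into $\OP\{\tau\}$, with its reduction generating the correct left-ideal downstairs. Your Vélu-formula step, showing that $\aid\star_K\phi$ is defined over $\OP$ once one knows $\iota_\aid\in\OP\{\tau\}$ with unit constant term, is a reasonable substitute for the citation and the unit argument (dividing $\gamma(a)$ for $a\in\aid\setminus\pid$) is fine.

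However, the crucial step — showing $\iota_\aid\in\OP\{\tau\}$ and that $\red_\Pid$ commutes with the $\rgcd$ — is where your argument has a genuine gap. You claim that ``at each Euclidean step the leading coefficient being divided by is a unit in $\OP$, so no denominators are introduced,'' but this is false as stated: the inputs $\phi_{a_i}$ do have unit leading coefficients, yet the intermediate remainders in the right-Euclidean algorithm need not, and the next division step divides by such a remainder. So the naive Euclidean argument does not keep coefficients in $\OP$. You also invoke uniqueness of the monic generator ``on both sides,'' but $\OP\{\tau\}$ is not a left-principal ring, so the left-ideal $\sum_i\OP\{\tau\}\phi_{a_i}$ has no monic generator; knowing $\iota_\aid\in\OP\{\tau\}$ does not automatically put $\iota_\aid$ in that ideal, and hence does not automatically imply $\red_\Pid(\iota_\aid)$ generates the reduced left-ideal. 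Repairing this requires a genuine Gauss-lemma for $\OP\{\tau\}$ (complicated by the Frobenius twist $\tau a = a^q\tau$, which makes content not naively multiplicative from the right) together with a degree count comparing $\deg_\tau\iota_\aid$ with $\deg_\tau\rgcd(\red_\Pid(\phi_{a_i}))$ — or, as the paper does, simply delegating these facts to Hayes. As written your proposal identifies the right picture but does not close the key step.
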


\begin{proof}
	The Drinfeld module $\mathfrak a\,\star_K\, \phi$ is defined over $\OP$ by
	\cite[Prop.~11.2]{hayes}, hence $\red_\Pid(\mathfrak a\,\star_K\,\phi)$ is
	well-defined. Let $\iota_{\mathfrak a}$ be the monic generator of the left-ideal in $\Ktau$ generated by
	$\{\phi_g : g\in \mathfrak a\}$.
	Since $\phi$ is defined over $\OP$, we deduce that $\iota_{\mathfrak a}$ must have
	coefficients in $\OP$ and that its reduction generates the left-ideal in
	$(\OP/\Pid)\{\tau\}$ generated by $\{\red_{\Pid}(\phi_g) : g\in \mathfrak
	a\}$. Consequently, $\red_\Pid (\iota_{\mathfrak a})$ is the isogeny associated to
	$\mathfrak a\,\star_{(\OP/\Pid)}\, \red_\Pid(\phi)$, which concludes the
	proof.
\end{proof}

\begin{proof}[Proof of Theorem~\ref{thm:grp-action}]

  Using the correspondence of Proposition~\ref{prop:correspondence} between
  $\Lbar$-isomor\-phism classes of Drinfeld modules, we can associate to any
  Drinfeld module in $\Dr_1(\A_\HH, L)$ a Drinfeld module in $\Dr_2(\Fq[X], L)$
  whose characteristic polynomial of the Frobenius endomorphism is $\xi$.
  Throughout this proof, we fix a place $\Pid$ of $\overline{\Fq(X)}$ above
  $\pid$. Such a place defines a compatible discrete valuation ring $\OP^{(K)}$
  in any finite extension $K$ of $\Fq(X)$.

	Let us prove the transitivity of the action. Let $j_1, j_2\in L$ be the
	$\j$-invariants of two Drinfeld modules $\phi,\psi\in \Dr_2(\Fq[X], L)$, whose
	characteristic polynomial of the Frobenius is $\xi$. Since the ideal $\langle
	p(\overline X)\rangle$ splits in $\A_\HH$ ($\langle p(\overline
	X)\rangle=\langle p(\overline X), \overline Y\rangle\cdot \langle p(\overline
	X), \overline Y+h(\overline X)\rangle$), Deuring's lifting theorems for
	Drinfeld modules \cite[Th.~3.4, Th.~3.5]{bae1992singular}
	(see~\cite[Ch.~13, §4]{lang1987elliptic} for the analogs for elliptic curves) imply that there
	exists a finite extension $K$ of $\Fq(X)$ and two $\C$-isomorphism classes of
	Drinfeld modules in $\Dr_2(\Fq[X], K)$, whose $\j$-invariants reduce to $j_1,
	j_2$ modulo $\Pid$. Those $\j$-invariants are algebraic integers in $\C$
	\cite[§(4.3)]{gekeler1983}. Moreover, those classes contain Drinfeld modules
	$\phi', \psi'\in\Dr_2(\Fq[X], K)$ whose endomorphism rings are isomorphic to
	$\End(\phi) \simeq\End(\psi)\simeq\A_\HH$. Therefore those Drinfeld modules
	can be regarded as Drinfeld modules in $\Dr_{1,\Pid}(\A_\HH, K)$. Since
	$\star_K$ acts on $\Dr_{1, \Pid}(\A_\HH, K)$ \cite[Prop.~11.2]{hayes}, and
	the group action associated to $\star_\C$ is transitive
	(Theorem~\ref{thm:classfieldFF}), there is an ideal $\aid\subset\A_\HH$ such
	that $\aid\star_K \phi'$ is isomorphic to $\psi'$. Consequently, the
	$\j$-invariants $\aid\star_K \phi'$ and $\psi'$ are equal, and therefore their
	reduction modulo $\Pid$ equals $j_2$. Using Lemma~\ref{lem:Pidstab}, the
	$\j$-invariant of $\aid\star_K \phi'$ reduces modulo $\Pid$ to the
	$\j$-invariant of $\aid\star_{\OP^{(K)}/\Pid} \phi$, which therefore also
	equals $j_2$. Hence $\aid$ sends the $\Lbar$-isomorphism class of $\phi$ to
	that of $\psi$ via the $\star_{\Lbar}$ action (which is the same as the
	$\star_L$-action on $\phi$, since $\phi$ is defined over $L$).

	Finally, let us prove the freeness of the action. Let $\phi\in\Dr_1(\A_\HH, L)$
	be a Drinfeld module, and set $\psi = \aid\,\star_L\, \phi$. Assume that
	$\phi$ and $\psi$ are $\Lbar$-isomorphic. Since $\phi$ and $\psi$ are
	$L$-isogenous, by Proposition~\ref{prop:Lisog_Lbarisom} they must be
	$L$-isomorphic. Let $\alpha\in L$ be such an isomorphism, i.e.
	$\alpha\phi\alpha^{-1} = \psi$. Using \cite[Th.~3.4]{bae1992singular} as
	above, the lifting procedure provides us with $\phi'\in\Dr_{1,\Pid}(\A_\HH,
	K)$ which reduces to $\phi$ modulo $\Pid$. Then set $\psi' = \aid\,\star_K\,\phi'$, and let $\iota_{\aid}$ be the
	associated isogeny. By the same argument as in the proof of
	Lemma~\ref{lem:Pidstab}, we obtain that $\iota_\aid$ is defined over
	$\OP^{(K)}$ and that $\red_\Pid(\iota_\aid) = \alpha$ which implies that
	$\iota_{\aid}\in K$, and therefore $\phi'$ and $\psi'$ are isomorphic.
	Consequently, $\aid$ is principal (Theorem~\ref{thm:classfieldFF}), and hence
	the group action associated to $\star_L$ is free.
\end{proof}

\section{Algorithms}\label{sec:algos}

\DeclareObjectsAsIn{} We also fix $\omega \coloneqq \gamma(X) \in L$.

\subsection{Computation of the group action}
\label{subsec:computation-group-action}

Before describing the algorithm for computing the group action in
Theorem~\ref{thm:grp-action}, we need data structures to represent elements in
$\Cl(\A_\HH)$ and $\Lbar$-isomorphism classes. Thanks to
Proposition~\ref{prop:correspondence}, we can use $\j$-invariants --- which are
elements of $L$ --- to represent $\Lbar$-isomorphism classes of Drinfeld modules
in $\Dr_1(\A_\HH, L)$. For representing elements in $\Cl(\A_\HH)$, we use
Mumford coordinates~\cite[Th.~14.5]{cohen2005handbook}: in our case
$\Cl(\A_\HH)$ is isomorphic to $\Pic^0(\HH)$. 

\begin{lemma}\label{lem:pic0-clB}
  The ring $\A_\HH$ is a Dedekind domain, and $\Cl(\A_\HH) \simeq \Pic^0(\HH)$.
\end{lemma}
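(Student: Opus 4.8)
The plan is to combine two facts already available in the excerpt with standard facts about smooth affine curves. First, I would establish that $\A_\HH$ is a Dedekind domain. This was essentially already observed in the proof of Proposition~\ref{prop:structure-end}: by \cite[Chap.~5, Th.~10.8]{lorenzini1996invitation}, $\A_\HH = \Fq[X][Y]/(\xi)$ is the integral closure of $\Fq[X]$ in the function field $\Fq(\HH)$, and a standing hypothesis of Section~\ref{subsec:correspondence} is that $\xi$ defines a smooth affine curve. An integrally closed domain of Krull dimension $1$ that is Noetherian is, by definition, a Dedekind domain; here $\A_\HH$ is a finitely generated $\Fq$-algebra (hence Noetherian), it is a domain because $\xi$ is irreducible (being the minimal polynomial of $\tau_L$ over $\Fq[X]$, see \cite[Lem.~3.3]{gek91}), it has dimension $1$ since $\Fq(\HH)/\Fq$ has transcendence degree $1$, and it is integrally closed by the result just cited. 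So $\A_\HH$ is Dedekind.

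Second, for the isomorphism $\Cl(\A_\HH)\simeq\Pic^0(\HH)$, I would argue geometrically. The smooth projective model of $\HH$ is obtained from $\Spec\A_\HH$ by adjoining the points at infinity; since $\HH$ is an \emph{imaginary} hyperelliptic curve (the place $\infty$ of $\Fq(X)$ does not split in $\Fq(\HH)$, as $\deg f = d$ is odd, forcing $\infty$ to be ramified), there is exactly one place $\infty_\HH$ of $\Fq(\HH)$ above $\infty$, and it is $\Fq$-rational. For a Dedekind domain which is the coordinate ring of a smooth affine curve $U = \overline\HH\setminus\{\infty_\HH\}$, the ideal class group $\Cl(\A_\HH)$ is canonically isomorphic to $\Pic(U)$, which in turn fits into the excision exact sequence $\Z\cdot[\infty_\HH]\to\Pic(\overline\HH)\to\Pic(U)\to 0$. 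Composing with the degree map $\Pic(\overline\HH)\to\Z$, whose kernel is $\Pic^0(\HH)$ by definition, and using that $[\infty_\HH]$ has degree $1$ (so it splits the degree sequence), one gets $\Pic(U)\simeq\Pic^0(\HH)$. Chaining the two isomorphisms gives $\Cl(\A_\HH)\simeq\Pic^0(\HH)$.

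The routine but slightly delicate point is making the identification $\Cl(\A_\HH)\simeq\Pic(U)$ and the excision sequence precise in the function-field language rather than the scheme-theoretic one: concretely, a fractional $\A_\HH$-ideal corresponds to a divisor on $U$ up to principal divisors, and extending by zero at $\infty_\HH$ realizes it as a divisor class on $\overline\HH$; the ambiguity is exactly the multiples of $[\infty_\HH]$, and the degree-$1$ point lets one normalize each class to have degree $0$. I expect the main (minor) obstacle to be citing this cleanly — e.g.\ via \cite[Ch.~7]{lorenzini1996invitation} or the handbook reference \cite[Th.~14.5]{cohen2005handbook} already used for Mumford coordinates — rather than reproving it, since the statement is classical for imaginary hyperelliptic curves and is precisely what makes Mumford coordinates a valid representation of $\Cl(\A_\HH)$.
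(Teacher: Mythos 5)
Your proof is correct and takes essentially the same route as the paper's: both halves rely on the affine curve being smooth (hence $\A_\HH$ Dedekind) and on the existence of a unique degree-$1$ place at infinity. The paper's argument for $\Cl(\A_\HH)\simeq\Pic^0(\HH)$ is stated in explicit divisor-map language — send an affine divisor $D$ to $D-\deg(D)\infty$ and check principality on both sides, citing \cite[Ch.~7, Prop.~7.1]{lorenzini1996invitation} — whereas you phrase the same argument through the scheme-theoretic excision sequence $\Z[\infty_\HH]\to\Pic(\overline\HH)\to\Pic(U)\to 0$ and the degree sequence split by $[\infty_\HH]$. These are the same proof in two idioms; nothing substantive differs, and your observation that $\deg f = d$ odd forces $\infty$ to ramify (rather than stay inert) correctly supplies the degree-$1$ claim that the paper takes for granted.
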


\begin{proof}
	The ring $\A_\HH$ is a Dedekind domain because $\HH$ is smooth in the affine
	plane~\cite[Ch.~7, Cor.~2.7]{lorenzini1996invitation}. The isomorphism
	$\Cl(\A_\HH) \simeq \Pic^0(\HH)$ comes from the fact that there is a unique
	degree-$1$ place $\infty$ at infinity. Indeed, the group of affine divisors
	$\Div(\A_\HH)$ (i.e. the subgroup of divisors whose valuation at infinity is
	$0$) is isomorphic to the group of degree-$0$ divisors in $\Div_0(\HH)$ via
	the map which sends a divisor $D$ in $\Div(\A_\HH)$ to $D-\deg(D)\infty$.
	Next, we notice that $D$ is principal in $\Div(\A_\HH)$ if and only if its
	image in $\Div_0(\HH)$ is principal. We conclude by using the isomorphism
	in~\cite[Ch.~7, Prop.~7.1]{lorenzini1996invitation}, which shows that the
	quotient of $\Div(\A_\HH)$ by principal divisors is isomorphic to $\Cl(\A_\HH)$.
\end{proof}

Since $\HH$ has genus $\lfloor (d-1)/2\rfloor$, elements in
$\Pic^0(\HH)$ can be represented by \idef{Mumford
coordinates}~\cite[Th.~14.5]{cohen2005handbook}, which are pairs of polynomials $(u, v)\in\Fq[X]^2$
such that:
\begin{enumerate}
	\item $u$ is a nonzero monic polynomial of degree at most $(d-1)/2$,
	\item $\deg(v)<\deg(u)$,
	\item $u$ divides $\xi(X, v(X))$.
\end{enumerate}
Mumford coordinates $(u,v)$ encode the ideal class of
$\langle u(\overline X), \overline Y-v(\overline X)\rangle\subset \A_\HH$.

\begin{algorithm}\label{algo:groupaction}
  \caption{\GroupAction}
  \KwIn{
		\begin{itemize}
			\item A $\j$-invariant $j\in L$ encoding an isomorphism class $\mathcal C$
				of Drinfeld modules in $\Dr_1(\A_\HH, L)$.
			\item Mumford coordinates $(u, v)\in \Fq[X]^2$ for a divisor class $[D]$ in
				$\Pic^0(\mathcal H)$.
		\end{itemize}
  }
  \KwOut{The $\j$-invariant obtained by making $[D]$
  act on $\mathcal C$ by the $\star_L$ action.}
  $\widetilde u \gets u(j^{-1}\tau^2 + \tau + \omega)\in L\{\tau\}$\;\label{step:eval-1}
  $\widetilde v \gets v(j^{-1}\tau^2 + \tau + \omega)\in L\{\tau\}$\;\label{step:eval-2}
  $\iota \gets \rgcd(\widetilde u, \tau_L-\widetilde v)$\label{step:iota}\tcc*{$\iota = \sum_{0\leq k\leq
  \deg_\tau(\iota)}\iota_k\tau^k$}
  $\widehat g \gets \iota_0^{-q}(\iota_0
  +\iota_1(\omega^q-\omega))$\label{step_algo:inverti0}\;
  $\widehat \Delta \gets j^{-q^{\deg_\tau(\iota)}}$\;
  \KwRet ${\widehat g}^{q+1}/{\widehat \Delta}$.
\end{algorithm}

\begin{proposition}
  Algorithm~\ref{algo:groupaction} (\GroupAction) is correct.
\end{proposition}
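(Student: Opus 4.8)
The plan is to unwind the definitions in Algorithm~\ref{algo:groupaction} and check that each step reproduces a piece of the $\star_L$ action as described in Section~\ref{subsec:class::action}, using the correspondence of Proposition~\ref{prop:correspondence}. First I would fix the canonical representative: given the $\j$-invariant $j\in L$, the formula just before Proposition~\ref{prop:Lisog_Lbarisom} tells us that $\phi_X = j^{-1}\tau^2 + \tau + \omega$ (when $j\neq 0$) is a Drinfeld module in $\Dr_2(\Fq[X],L)$ with this $\j$-invariant, and by Proposition~\ref{prop:correspondence} the corresponding rank-one Drinfeld module $\psi\in\Dr_1(\A_\HH,L)$ satisfies $\psi_{\overline X} = \phi_X$ and $\psi_{\overline Y} = \tau_L$ (here we need that the canonical representative indeed has characteristic polynomial $\xi$, which follows from Lemma~\ref{lem:rk1_structure} together with the normalization $\psi_{\overline Y}=\tau_L$, i.e. $\beta=1$). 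So $\widetilde u = \psi_{u(\overline X)}$ and $\widetilde v = \psi_{v(\overline X)}$ are the images under $\psi$ of the generators $u(\overline X)$ and $Y - v(\overline X)$ (the latter because $\psi_{\overline Y} = \tau_L$) of the ideal $\aid = \langle u(\overline X), \overline Y - v(\overline X)\rangle$ encoded by the Mumford coordinates.

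Next I would identify $\iota$ with the isogeny $\iota_{\aid}$ attached to the action. By definition $\iota_{\aid} = \rgcd(\{\psi_a : a\in\aid\})$, and since $\aid$ is generated by $u(\overline X)$ and $\overline Y - v(\overline X)$ as an ideal, the left-ideal of $\Ltau$ generated by $\{\psi_a : a\in\aid\}$ is generated by $\psi_{u(\overline X)} = \widetilde u$ and $\psi_{\overline Y - v(\overline X)} = \tau_L - \widetilde v$; hence $\iota = \rgcd(\widetilde u, \tau_L - \widetilde v) = \iota_{\aid}$, the isogeny $\psi\to \aid\star_L\psi$. One small point to justify carefully: $\iota_{\aid}$ is the $\rgcd$ of $\phi_a$ over \emph{all} $a\in\aid$, not just over a generating set; but because $\psi$ is a ring morphism, $\psi_{ab} = \psi_a\psi_b$ and $\psi_{a+b}=\psi_a+\psi_b$, so any $\psi_a$ with $a\in\aid = u(\overline X)\A_\HH + (\overline Y - v(\overline X))\A_\HH$ is a left-multiple combination of $\widetilde u$ and $\tau_L - \widetilde v$, giving equality of the generated left-ideals and hence of their monic generators.

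Then I would recover the output Drinfeld module from $\iota$ via the Vélu-type formula \eqref{eq:Velu}. Write $\aid\star_L\psi = \psi'$, with $\psi'_{\overline X} = \widehat\Delta\,\tau^2 + \widehat g\,\tau + \omega$. Since $\iota\,\psi_{\overline X} = \psi'_{\overline X}\,\iota$, comparing the top coefficient ($\tau$-degree $\deg_\tau(\iota)+2$) gives $\iota_{\deg_\tau(\iota)}\,\Delta^{q^{\deg_\tau\iota}} = \widehat\Delta\,\iota_{\deg_\tau(\iota)}^{\,q^2}$; but $\iota$ is separable and monic-leading in the appropriate sense — actually one only needs that $\iota$ is separable, so $\iota_0\neq 0$, and then the coefficient of $\tau^0$ and $\tau^1$ in the intertwining relation give exactly the formula $\widehat g = \iota_0^{-q}(\iota_0 g + \iota_1(\omega^q - \omega))$ after substituting $g=1$ (the coefficient of $\tau$ in the canonical $\phi_X$), matching Step~\ref{step_algo:inverti0}. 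For $\widehat\Delta$ one uses that the $\j$-invariant of $\psi'$ can be read off, or more directly that $\widehat\Delta = j^{-q^{\deg_\tau\iota}}$ follows from tracking leading coefficients through $\iota$; I would verify this by induction using \eqref{eq:Velu} at $i = \deg_\tau(\phi_X) = 2$ together with the normalization $\Delta = j^{-1}$. Finally, the returned value $\widehat g^{q+1}/\widehat\Delta$ is by definition the $\j$-invariant of $\psi'$, which is $\j(\aid\star_L\psi)$, as required.

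The main obstacle I expect is the bookkeeping in the last step: namely making fully precise how $\widehat\Delta$ and $\widehat g$ are extracted from $\iota$ via \eqref{eq:Velu}, and in particular justifying the clean closed forms $\widehat\Delta = j^{-q^{\deg_\tau\iota}}$ and $\widehat g = \iota_0^{-q}(\iota_0 + \iota_1(\omega^q-\omega))$ — the latter requires plugging the specific shape $\phi_X = \Delta\tau^2 + g\tau + \omega$ with $g=1$, $\Delta = j^{-1}$ into the degree-$0$ and degree-$1$ instances of \eqref{eq:Velu} and simplifying. A secondary point to handle is the degenerate case $j=0$: but this is excluded here because, as noted in the proof of Proposition~\ref{prop:correspondence}, $\j$-invariant $0$ forces the Drinfeld module to be supersingular~\cite[Lem.~3.2]{bae1992singular}, contradicting the standing assumption that $p\nmid h$; so every class in $\Dr_1(\A_\HH,L)$ has $j\neq 0$ and the formula $j^{-1}\tau^2 + \tau + \omega$ is legitimate. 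Everything else is a direct translation through Proposition~\ref{prop:correspondence} and the definition of $\star_L$.
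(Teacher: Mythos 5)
Your proof takes essentially the same route as the paper's: identify the canonical rank-two representative $\phi_X = j^{-1}\tau^2+\tau+\omega$, recognize $\iota$ as the $\rgcd$ cutting out the isogeny attached to the ideal $\aid = \langle u(\overline X),\overline Y - v(\overline X)\rangle$, and read off $\widehat g$ and $\widehat \Delta$ by comparing the coefficients of $\tau$ and $\tau^{\degtau\iota+2}$ in $\iota\,\phi_X=\psi_X\,\iota$. Two points are missing or incorrect, however.

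First, and most importantly, you assert without proof that $\iota$ is separable, which is precisely what makes Step~\ref{step_algo:inverti0} legitimate ($\iota_0$ must be invertible). The paper supplies the argument: $\iota$ right-divides $\widetilde u=\phi_{u(\overline X)}$, and $\phi_{u(\overline X)}$ is separable because $u(\overline X)\notin\pid=\Ker\gamma$. This in turn holds because the Mumford coordinate condition forces $\deg u\leq(d-1)/2$, so $\deg_{\A_\HH}\bigl(u(\overline X)\bigr)=2\deg u< d=\deg\pid$, making it impossible for $\pid$ to contain $u(\overline X)$. (A cruder argument --- ``$\deg u<\deg p$ so $p\nmid u$'' --- fails when $m>1$; you need the $\A_\HH$-degree comparison with $\deg\pid$.) This gap should be filled; without it the algorithm is not justified.

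Second, your parenthetical ``here we need that the canonical representative indeed has characteristic polynomial $\xi$, which follows from Lemma~\ref{lem:rk1_structure} together with the normalization $\psi_{\overline Y}=\tau_L$, i.e. $\beta=1$'' is circular and its conclusion is not generally true: Lemma~\ref{lem:rk1_structure} says that $\beta$ is \emph{determined} by $\Delta=j^{-1}$ and $g=1$, as a fixed square root of $\alpha\Norm_{L/\Fq}(j^{-1})$, and there is no reason for this to equal $1$. The paper deliberately keeps $\phi_{\overline Y}=\beta\tau_L$ with $\beta\in\Fq^\times$ unspecified; the algorithm still correctly uses $\tau_L$ (not $\beta\tau_L$) in Step~\ref{step:iota} because the relevant identification is the ring isomorphism $\A_\HH\xrightarrow{\sim}\End_L(\phi)$ of Proposition~\ref{prop:structure-end}, which sends $\overline Y$ to $\tau_L$, and it is this isomorphism (as in Lemma~\ref{lem:corres_id_isog}) that governs the ideal-to-isogeny dictionary. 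You should either leave $\beta$ free as the paper does, or justify the passage to $\tau_L$ via Proposition~\ref{prop:structure-end} rather than by asserting $\beta=1$.

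The rest --- the passage from ideal generators to left-ideal generators, the non-vanishing $j\neq 0$ via the supersingularity criterion, and the coefficient extraction recovering $\widehat g = \iota_0^{-q}(\iota_0+\iota_1(\omega^q-\omega))$ and $\widehat\Delta=j^{-q^{\degtau\iota}}$ --- matches the paper, though your description of ``the coefficient of $\tau^0$ and $\tau^1$'' should really just be the coefficient of $\tau^1$ (the $\tau^0$ comparison is vacuous), and the top-degree comparison directly gives $\widehat\Delta$ without needing an induction on \eqref{eq:Velu}.
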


\begin{proof}
  A representative of the class in $\Cl(\A_\HH)\simeq \Pic^0(\HH)$ represented
  by the Mumford coordinates $(u,v)$ is the ideal $\langle u(\overline X),
  \overline Y-v(\overline X)\rangle\subset \A_\HH$. A representative of the
  isomorphism class of Drinfeld modules represented by the $\j$-invariant $j$
  is a Drinfeld module $\phi\in\Dr_1(\A_\HH, L)$ such that $\phi_{\overline
  X} = j^{-1}\tau^2 + \tau + \omega$ and $\phi_{\overline Y} = \beta\tau_L$ for
  some $\beta\in\Fq^\times$ (see Section~\ref{subsec:correspondence}). Note
  that $j\ne 0$ by \cite[Lem.~3.2]{bae1992singular}. We shall prove that
  $\langle u(\overline X), \overline Y-v(\overline X)\rangle\star_L \phi =
  \psi$, where $\psi\in\Dr_1(\A_\HH, L)$ is the Drinfeld module such that
  $\psi_{\overline X} = \widehat \Delta\tau^2 + \widehat g\tau +\omega$ and
  $\psi_{\overline Y} = \beta\tau_L$. 

  The Ore polynomial $\iota$ computed at Step~\ref{step:iota} is
  $\rgcd(\phi_{u(\overline{X})}, \tau_L-\phi_{v(\overline{X})})$, which is by
  construction the monic Ore polynomial defining the isogeny. Since we need to
  invert the coefficient $\iota_0$ (at Step~\ref{step_algo:inverti0}), we need
  to prove that $\iota$ is separable. This is indeed true: $\iota$
  right-divides $\phi_{u(\overline X)}$, which is separable because $\deg(u) <
  d$. Hence $u$ cannot be a multiple of $p$, which is a generator of
  $\Ker(\gamma)$.

  Since $\iota$ is an isogeny \cite[Cor.~5.10]{hayes}, there exists
  $\psi\in\Dr_1(\A_\HH, L)$ such that $\iota\cdot \phi_{\overline X} =
  \psi_{\overline X}\cdot \iota$ where $\psi_{\overline X}$ has $\tau$-degree
  $2$. It remains to prove that $\psi_{\overline X} = \widehat\Delta\tau^2 +
  \widehat g\tau +\omega$. This is done by extracting as in
  Equations~\eqref{eq:Velu} the coefficients of $\tau$ and
  $\tau^{\deg_\tau(\iota)+2}$ in the equality $\iota\cdot \phi_{\overline X} =
  \psi_{\overline X}\cdot \iota$, which provides us with:

	\[\left\{\begin{array}{rcl}
		\iota_0\,g +\iota_1\,\omega^q &=& \widehat g\,\iota_0^q +
		\omega\,\iota_1,\\
		j^{-q^{\deg_\tau(\iota)}} &=& \widehat \Delta.
	\end{array}\right.\]

  There is only one pair $(\widehat \Delta, \widehat g)\in L^2$ which satisfies
  these two equalities, and the associated Drinfeld module has $\j$-invariant
  ${\widehat g}^{q+1}/{\widehat \Delta}$.
\end{proof}

\begin{algorithm}\label{algo:ore-euclidean-division}
  \caption{\OreEuclideanDivision}
  \KwIn{Two Ore polynomials $a,b \in \Ltau$.}
  \KwOut{Ore polynomials $q, r \in \Ltau$ such that $a = qb + r$ and $\deg_\tau(r) < \deg_\tau(b)$.}
    $q \gets 0$ \;
    $r \gets a$ \;
    \While {$\deg_\tau(r) \geqslant \deg_\tau(b)$}{
      $\varepsilon \gets \lc(r)
       \cdot \tau^{\deg_\tau(r) - \deg_\tau(b)} \cdot (\lc(b))^{-1}$ \;
      $q \gets q + \varepsilon$ \;
      $r \gets r - \varepsilon \cdot b$ \;
    }
    \KwRet $(q, r)$.
\end{algorithm}

\begin{lemma}\label{lemma:correct-euclidean-division}
  
  Algorithm~\ref{algo:ore-euclidean-division} (\OreEuclideanDivision)
  terminates and is correct.
    
\end{lemma}

\begin{proof}

  As the degree of $r$
  decreases at each such call, the algorithm must terminate.
  At each recursive call, the relation $a = qb + r$ holds, which implies that
  the algorithm is correct 

\end{proof}

We finish this section by studying the asymptotic complexity of
Algorithm~\ref{algo:groupaction}. For Ore Euclidean division and right-greatest
common divisor computation, we use Algorithms~\ref{algo:ore-euclidean-division}
and \ref{algo:euclide-rgcd}, mimicking
naïve algorithms for usual (commutative) univariate polynomials. In fact, using
the fastest known algorithms for Ore polynomial multiplication
and Euclidean division in 
\cite{caruso-le_borgne-fact, caruso-le_borgne-mult} would not improve our
complexity bound for Algorithm~\ref{algo:groupaction}, see Remark~\ref{rem:caruso-le_borgne}.

Here and subsequently, if $f$ and $g$
are two functions defined on $\Z_{\geq 0}^2$, with values in $\R_{>0}$, we write $f =
O(g)$ if there exists $M > 0$ such that for every $(x, y) \in \Z_{\geq 0}^2$, $f(x, y) \leqslant M g(x, y)$.
Also, by "application of the Frobenius
endomorphism", we mean computing $\lambda^q$ given $\lambda \in L$. 

\begin{lemma}\label{lem:compl-euclidean-division}
  
  Assuming $\deg_\tau(a) > \deg_\tau(b)$,
  Algorithm~\ref{algo:ore-euclidean-division} (\OreEuclideanDivision) requires
  $O(\deg_\tau(b)(\deg_\tau(a) - \deg_\tau(b))$ arithmetic operations in $L$
  and $O(\deg_\tau(b)(\deg_\tau(a) - \deg_\tau(b))$ applications of the
  Frobenius endomorphism.

\end{lemma}

\begin{proof}

  First, we notice that in the worst-case scenario, the algorithm needs to
  compute $\tau^{\deg_\tau(r) - \deg_\tau(b)} \lc(b)^{-1}$ and
  $\tau^{\deg_\tau(r) - \deg_\tau(b)} \lc(b)^{-1} b$  for $\deg_\tau(r)$ ranging
  from $\deg_\tau(b)$ to $\deg_\tau(a)$. This can be precomputed for
  $O(\deg_\tau(b)(\deg_\tau(a)-\deg_\tau(b))$ operations in $L$ and
    applications of the Frobenius endomorphism. 
%    First, compute $\tau b^{-1}$, and re-use the result to compute $\tau
%  b^{-1} b$. This costs exactly $1 + \deg_\tau(b)$ applications of the
%  Frobenius. By storing those results, each remaining $\tau^{\deg_\tau(r) -
%  \deg_\tau(b)} \lc(b)^{-1}$ and $\tau^{\deg_\tau(r) - \deg_\tau(b)}
%  \lc(b)^{-1} b$ can be iteratively computed, starting from $\deg_\tau(r) = 2$,
%  up to $\deg_\tau(r) = \deg_\tau(a) - 1$. Computing all of them therefore
%  costs $O(\deg_\tau(b)(\deg_\tau(a) - \deg_\tau(b)))$ applications of the
%  Frobenius.

    At each step of the loop, computing $\varepsilon$ and
  $q+\varepsilon$ costs a constant
  number of operations, and computing $r$ costs $O(\deg_\tau(b))$ operations.
  In total, this amounts to $\deg_\tau(b)(\deg_\tau(a) - \deg_\tau(b))$ operations
  in $L$ and the same upper bound for the number of applications of the
  Frobenius endormorphism.
\end{proof}

\begin{algorithm}\label{algo:euclide-rgcd}
  \caption{\EuclidRGCD}
  \KwIn{Two Ore polynomials $a=\sum_{0\leq i\leq \deg_\tau(a)} a_i\tau^i,
  b=\sum_{0\leq i\leq \deg_\tau(b)} b_i\tau^i$ in $L\{\tau\}$, such that $a\ne 0$.}
    \KwOut{The right-gcd of $a$ and $b$.}
    \If{$b = 0$}{\KwRet $a$.}
    \If{$\deg_\tau(b) > \deg_\tau(a)$}{\KwRet $\EuclidRGCD(b,a)$.}
    $(q, r) \gets \OreEuclideanDivision(a, b)$ \;
    \KwRet $\EuclidRGCD\left(r, b\right)$.
\end{algorithm}

\begin{lemma}\label{prop:euclideRGCD_correct}
  Algorithm~\ref{algo:euclide-rgcd} (\EuclidRGCD) terminates and is correct.
\end{lemma}

\begin{proof}

  This algorithm is the classical Euclid's algorithm for computing a gcd and
  its proof of correctness is similar to the classical case.
%  It relies on the
%  fact that an Ore polynomial right-divides $a$ and $b$ if and only if it
%  right-divides $r$ and $b$, where $r$ is the reminder of the right-Ore
%  Euclidean division of $a$ by $b$. This implies that the right-gcd of $a$ and
%  $b$ does not change at each recursive call. Consequently, if the algorithm
%  terminates, then it returns the right-gcd of $a$ and $b$. Noticing that the
%  integer $\max(\deg_\tau(a), \deg_\tau(b))$ decreases at each recursive call
%  shows that the algorithm must terminate.

\end{proof}

The following lemma yields a uniform complexity bound for the rgcd in terms of all the
parameters $q, d, \deg_\tau(a), \deg_\tau(b)$.
%, since evaluating the
%Frobenius can be done using binary exponentiation in at most $2 \log_2(q)$
%operations in $L$. The cost of arithmetic operations in $L$ can be obtained by
%using standard estimates for finite fields.

\begin{lemma}\label{lem:complexite-euclide-rgcd}
  
  Algorithm~\ref{algo:euclide-rgcd} (\EuclidRGCD)
  requires at most $O(\deg_\tau(a)\deg_\tau(b))$ arithmetic operations in $L$ and $O(\deg_\tau(a)\deg_\tau(b))$ applications of
  the Frobenius endomorphism.

\end{lemma}

\begin{proof}
  This complexity is proved by using the standard methods to evaluate the
  complexity of Euclid's algorithm from the complexity of the Euclidean
  division in $O(\deg_\tau(b)(\deg_\tau(a)-\deg_\tau(b)))$ operations
  (Lemma~\ref{lem:compl-euclidean-division}), see
  e.g.~\cite[Th.~3.11]{von_zur_gathen_modern_2013}.
\end{proof}

% We finish this section with the asymptotic complexity of
% Algorithm~\ref{algo:groupaction}. We express the complexity in terms of $d =
% [L:\Fq]$ and we discard the dependency in $q$ for two reasons. First,
% cryptographic applications would likely require us to fix $q$ and let $d$
% grow. Second, the complexity of the most efficient algorithms for fast
% arithmetic of Ore polynomials are stated for fixed $q$
% \cite{caruso-le_borgne-mult}. 

% \todo{J'ai introduit $x_0$ et $y_0$ car il me semble que sans cela, la
% définition n'est pas bonne. Prendre par exemple $f(x) = g(x) = x$ pour $x \geq
% 1$, $f(0) = 10000000$ et $g(0) = 0$ : on n'aurait pas $f = O(g)$ avec la
% définition.}
% \todo{en fait, il faut prendre des valeurs positives (pas nulles, sion en
%   effet ça ne va pas. Mais attention, en mettant un x_0, y_0, ça ne va pas
%   non plus.}

\begin{proposition}\label{prop:complexity:groupaction}

  Algorithm~\ref{algo:groupaction} requires $O(d^2)$ operations in $L$ and
  $O(d^2)$ applications of the Frobenius endomorphism.
  % Under the hypotheses that $q$ is fixed and that operations in $K$ and Frobenius applications cost
  % $\widetilde O (d)$ operations in $\Fq$, one can use instead
  % the probabilistic Las Vegas algorithms in \cite{caruso-le_borgne-fact}
  % for the computation of the rgcd. Algorithm~\ref{algo:groupaction}
  % then requires at most
  % $\widetilde{O}(d^3)$ expected arithmetic
  % operations in $\Fq$, where the constant in $\widetilde{O}$ depends on $q$.

\end{proposition}

\begin{proof}

  Writing $u = u_\ell X^\ell + \dots + u_0$ and $\phi_X =
  j^{-1}\tau^2+\tau+\omega$, we have $\ell\leq(d-1)/2$ and $\widetilde{u} =
  u_\ell \phi_X^\ell + \dots + u_0$. In order to compute $\tilde u$, we can
  first compute $\phi_X^2, \dots, \phi_X^\ell$ iteratively. Let $n\in\llbracket
  1,\ell-1\rrbracket$ and write $\phi_X^n = \sum_{i=0}^{2n} a_i \tau^i$. Then
    \[\phi_X\phi_X^n = \sum_{i=0}^{2n}\left(
      a_i\omega \tau^i
      + g a_i^q\tau^{i+1}
      + \Delta a_i^{q^2} \tau^{i+2}\right).\]

      Knowing $\phi_{X}^n$, the computation of $\phi_X^{n+1}$ requires $O(n)$ additions, multiplications,
  $q$-exponen\-tia\-tions and $q^2$-exponentiations, which is $O(n)$ operations in
  $L$ and $O(n)$ applications of the Frobenius endomorphism of $L/\Fq$.
  Consequently, $O(d^2)$ operations in $L$ and $O(d^2)$ applications of the
  Frobenius endomorphism are required to compute $\phi_X^2, \dots,
  \phi_X^\ell$.

  The last operation that will affect the asymptotic complexity is the rgcd, which we perform using Algorithm~\ref{algo:euclide-rgcd}.
  We have $\deg(u) = \ell$, $\deg(v) < \ell$, so that $\tilde{v}$ and $\tau_L -
  \tilde{u}$ respectively have $\tau$-degree at most $d$. 
  By Lemma~\ref{lem:complexite-euclide-rgcd}, this algorithm
  requires $O(d^2)$ operations in $L$ and $O(d^2)$ applications of the
  Frobenius endomorphism.

  % The analysis of the second variant for fixed $q$ is similar, except we count
  % operations in $\Fq$ and that computing $\iota$ requires $\widetilde O
  % (\SMgeq(d, d)) = \widetilde O (d^{\frac{9 + \theta}{5 + \theta}})$ operations in
  % $\Fq$~\cite[Prop.~3.1]{caruso-le_borgne-mult}.

\end{proof}
  
\subsection{Computation of the ideal corresponding to an
isogeny}\label{subsec:algo::effective-transitivity}

In this section, we make explicit the transitivity of the group action. Given two
Drinfeld modules $\phi, \psi\in\Dr_1(\A_\HH, L)$, our goal is to compute Mumford coordinates
$(u,v)\in\Fq[X]^2$ such that the class of $\langle u(\overline X),  \overline
Y-v(\overline X)\rangle\subset\A_\HH$ sends the
$\Lbar$-isomorphism class of $\phi$ to that of $\psi$, via $\star_L$.
We emphasize that, given $\phi,\psi\in\Dr_1(\A_\HH, L)$, computing an isogeny
$\iota$ between $\phi$ and $\psi$ can
be achieved efficiently by using Wesolowski's
method~\cite{cryptoeprint:2022/438}. Our algorithm then converts such an
isogeny $\iota$ into the desired Mumford coordinates $(u,v)$. For simplicity, we shall assume
that the norm of the isogeny is coprime to $\mathfrak p$, in order to avoid
separability issues. In the general case, once the part of the isogeny which is
coprime to the characteristic has been treated, the part whose norm is a power
of $\mathfrak p$ can be computed easily since it is either a power of the
Frobenius or a power of its dual, and these cases can be easily discriminated.

We use the shorthand notation $\Dr_2(\Fq[X], L)_\xi$ to denote the subset of
Drinfeld modules in $\Dr_2(\Fq[X], L)$ whose
characteristic polynomial of the Frobenius endomorphism is $\xi$. By
Proposition~\ref{prop:correspondence}, to any $\phi\in\Dr_1(\A_\HH, L)$, we can
associate a Drinfeld module $\phi'\in\Dr_2(\Fq[X], L)_\xi$. Notice that
$\star_L$ leaves $\Dr_2(\Fq[X], L)_\xi$ globally invariant. Hence, by slight
abuse of notation, we shall use the $\star_L$ notation to also denote the
corresponding action of nonzero ideals in $\A_\HH$ over $\Dr_2(\Fq[X], L)_\xi$.
Another useful remark is that computing Mumford coordinates for the class of a
given ideal in $\A_\HH$ can be done efficiently by using the reduction step of
Cantor's algorithm~\cite[Algo.~14.7]{cohen2005handbook}. Therefore, our main
algorithmic task is to construct the ideal in $\A_\HH$ corresponding to a given
isogeny.

We start by the following lemma, which establishes a correspondence between
ideals in $\A_\HH$ and isogenies:

\begin{lemma}\label{lem:corres_id_isog}
  Let $\phi\in\Dr_2(\Fq[X], L)_\xi$ be an ordinary Drinfeld module. Then there
  is a one-to-one correspondence between monic isogenies with domain $\phi$ and
  nonzero ideals in $\A_\HH$. Moreover, let
  $\phi_1,\phi_2,\phi_3\in\Dr_2(\Fq[X], L)_\xi$ be Drinfeld modules and
  $\iota_1:\phi_1\to\phi_2$, $\iota_2:\phi_2\to\phi_3$ be isogenies; the ideal associated to
  $\iota_2\cdot\iota_1$ in $\A_\HH$ is the product of the ideals associated to
  $\iota_1$ and $\iota_2$.
\end{lemma}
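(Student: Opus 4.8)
The plan is to build the correspondence between monic isogenies out of $\phi$ and nonzero ideals of $\A_\HH$ via the endomorphism-ring structure, using the fact (Proposition~\ref{prop:structure-end}) that $\A_\HH \simeq \End_L(\phi) = \End_{\Lbar}(\phi)$ by $\overline X \mapsto \phi_X$, $\overline Y\mapsto \tau_L$. First I would associate to a nonzero ideal $\aid\subset\A_\HH$ the isogeny $\iota_\aid = \rgcd\{\phi_a : a\in\aid\}$ (identifying $\A_\HH$ with its image in $\End(\phi)$ via the isomorphism), which is the monic generator of the left ideal of $\Ltau$ generated by $\{\phi_a : a\in\aid\}$; this is exactly the isogeny defining $\aid\star_L\phi$ from Section~\ref{subsec:class::action}. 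In the other direction, given a monic isogeny $\iota:\phi\to\psi$, I would define $\aid_\iota = \{a\in\A_\HH : \phi_a \in \Ltau\cdot\iota\}$, i.e. the set of $a$ such that $\iota$ right-divides $\phi_a$; this is visibly an ideal of $\A_\HH$, and it is nonzero because $\iota$ is an isogeny (every isogeny is an $a$-isogeny for some nonzero $a\in\A$, as recalled after Lemma~\ref{lem:separable-exists}, and one can pull this back to $\A_\HH$ since $\Fq[X]\subset\A_\HH$).

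The heart of the argument is that these two assignments are mutually inverse. For one direction, starting from $\aid$, I must show $\aid_{\iota_\aid} = \aid$: the inclusion $\aid\subseteq\aid_{\iota_\aid}$ is immediate since each $\phi_a$ with $a\in\aid$ is a left multiple of $\rgcd\{\phi_a\}$; the reverse inclusion is where one uses that $\A_\HH$ is a Dedekind domain (Lemma~\ref{lem:pic0-clB}) together with the norm/degree bookkeeping from \cite[§(3.9), Lem.~3.10]{gek91}. Concretely, $\deg_\tau(\iota_\aid)$ should equal $\deg(\aid) = \log_q\#(\A_\HH/\aid)$ (up to the correct normalization, since we are in the ordinary case and $\iota_\aid$ is separable by Lemma~\ref{lem:separable-exists}), and likewise $\#(\A_\HH/\aid_{\iota_\aid})$ is pinned down by $\deg_\tau(\iota)$; matching cardinalities of nested ideals in a Dedekind domain forces equality. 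For the other direction, starting from a monic isogeny $\iota$, I show $\iota_{\aid_\iota} = \iota$: by definition $\iota$ right-divides every $\phi_a$ with $a\in\aid_\iota$, so $\iota$ right-divides $\iota_{\aid_\iota} = \rgcd\{\phi_a : a\in\aid_\iota\}$; conversely $\iota_{\aid_\iota}$ right-divides $\iota$ because $\ker(\iota_{\aid_\iota}) = \bigcap_{a\in\aid_\iota}\ker(\phi_a) \subseteq \ker(\iota)$ — indeed $\ker(\iota)$ is a finite $\A_\HH$-submodule of $\Lbar$ annihilated exactly by $\aid_\iota$, so it is killed by every $a\in\aid_\iota$ — and a separable Ore polynomial right-divides another iff its kernel is contained in the other's (the fact recalled at the end of Section~1.1); both are monic, so they are equal. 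The step I expect to be the main obstacle is the precise degree/norm computation showing $\deg_\tau(\iota_\aid)$ determines and is determined by $\#(\A_\HH/\aid)$, since it requires carefully tracking the height-versus-separable-part decomposition and the normalization in the definition of $\isonorm$; this is exactly where ordinariness is needed to guarantee separability of the relevant isogenies via Lemma~\ref{lem:separable-exists}.

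For the multiplicativity statement, I would argue directly at the level of Ore polynomials: if $\iota_1:\phi_1\to\phi_2$ corresponds to $\aid_1$ and $\iota_2:\phi_2\to\phi_3$ corresponds to $\aid_2$, then $\aid_{\iota_2\iota_1} = \{a\in\A_\HH : \iota_2\iota_1 \text{ right-divides } (\phi_1)_a\}$. Unwinding, $\iota_2\iota_1$ right-divides $(\phi_1)_a$ iff $\iota_1$ right-divides $(\phi_1)_a$ (so $a\in\aid_1$) and, writing $(\phi_1)_a = Q\iota_1$, one has $\iota_2$ right-divides $Q = (\phi_2)_a$ when $a\in\aid_1$ — using the intertwining $\iota_1(\phi_1)_a = (\phi_2)_a\iota_1$ — so $a\in\aid_2$ as well; hence $\aid_{\iota_2\iota_1} = \aid_1\cap\aid_2$. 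To see this intersection is the product $\aid_1\aid_2$, I would invoke that $\A_\HH$ is Dedekind and compare cardinalities: $\#(\A_\HH/\aid_{\iota_2\iota_1})$ corresponds to $\deg_\tau(\iota_2\iota_1) = \deg_\tau(\iota_1) + \deg_\tau(\iota_2)$ (degrees add under composition), which matches $\deg(\aid_1) + \deg(\aid_2) = \deg(\aid_1\aid_2)$, and $\aid_1\aid_2\subseteq\aid_1\cap\aid_2$ always; equal cardinality of ideals in a Dedekind domain, one contained in the other, forces $\aid_1\aid_2 = \aid_1\cap\aid_2$. Alternatively, and perhaps more cleanly, I would deduce multiplicativity from the already-known multiplicativity of $\aid\mapsto\aid\star_L\phi$ stated in Section~\ref{subsec:class::action} (``this map actually has multiplicative properties''), combined with the bijectivity just established, so that the two paragraphs reinforce each other; I will pick whichever route keeps the exposition shortest.
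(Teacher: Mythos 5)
Your $\aid_\iota=\{a\in\A_\HH:\phi_a\in\Ltau\iota\}$ is the same ideal as the paper's $\Hom(\psi,\phi)\,\iota$ (any $Q$ with $Q\iota\in\End(\phi)$ automatically intertwines, so $Q\in\Hom(\psi,\phi)$), so the two formulations of the correspondence agree. Where you diverge is that the paper cites Gekeler [§(3.6)] for the bijection and proves multiplicativity by transporting the ideal through the isomorphism $\Xi:\End(\phi_1)\to\End(\phi_2)$ and then matching norms, whereas you try to re-derive both by hand; your sketch has two concrete problems.

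First, in the step ``$\iota_{\aid_\iota}$ right-divides $\iota$,'' your justification is the wrong way round. You argue that $\ker(\iota)$ is killed by every $a\in\aid_\iota$; that proves $\ker(\iota)\subseteq\bigcap_{a\in\aid_\iota}\ker(\phi_a)=\ker(\iota_{\aid_\iota})$, i.e.\ that $\iota$ right-divides $\iota_{\aid_\iota}$, which you already had. What you actually need is the reverse containment $\ker(\iota_{\aid_\iota})\subseteq\ker(\iota)$, i.e.\ that the full $\aid_\iota$-torsion of $\Lbar$ is no bigger than $\ker(\iota)$. That is not automatic from $\mathrm{Ann}(\ker\iota)=\aid_\iota$: a priori there could be $\aid_\iota$-torsion outside $\ker(\iota)$. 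You would need an extra input, e.g.\ that $\Lbar[n]$ is a cyclic $\A_\HH$-module (so a submodule with annihilator $\aid_\iota$ is forced to be all of $\Lbar[\aid_\iota]$), or a cardinality/norm count matching $\deg_\tau(\iota)$ against $\deg(\aid_\iota)$. The paper sidesteps this by simply referring to \cite[§(3.6)]{gek91} for the bijection.

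Second, the intertwining identity in your multiplicativity argument is used incorrectly. From $(\phi_1)_a=Q\iota_1$ and $\iota_1(\phi_1)_a=(\phi_2)_a\iota_1$ one gets $\iota_1 Q\iota_1=(\phi_2)_a\iota_1$, hence $(\phi_2)_a=\iota_1 Q$, \emph{not} $Q=(\phi_2)_a$. So ``$\iota_2$ right-divides $Q$'' is not the same as ``$a\in\aid_2$'' (indeed $Q$ lives in $\Hom(\phi_2,\phi_1)$, not in $\End(\phi_2)$), and the claimed identity $\aid_{\iota_2\iota_1}=\aid_1\cap\aid_2$ does not follow; it is also the wrong target, since in a Dedekind domain $\aid_1\cap\aid_2\neq\aid_1\aid_2$ unless the ideals are coprime, while your subsequent cardinality argument only handles the coprime-type situation. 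The paper's proof avoids this by showing that $\Xi$ carries $\Hom(\psi,\phi)\iota\subset\End(\phi)$ to $\iota\Hom(\psi,\phi)\subset\End(\psi)$, then using commutativity of $\End(\phi_2)$ to reorder the factors and concluding with multiplicativity of the isogeny norm \cite[Lem.~3.10]{gek91}. Your fallback --- invoking the multiplicativity of $\aid\mapsto\aid\star_L\phi$ from Section~\ref{subsec:class::action} --- is logically fine, but note that the paper only asserts that fact in passing (via \cite{hayes}) rather than proving it, so you would be trading one citation for another rather than giving an independent argument.
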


\begin{proof}
  To any monic isogeny $\iota:\phi\to\psi$, we
  associate the nonzero ideal $\Hom(\psi, \phi)\iota\subset\End(\phi)\simeq
  \A_\HH$. Notice that $\psi\in\Dr_2(\Fq[X], L)_\xi$ (Section~\ref{sec:CFT}).
  Reciprocally, to any nonzero ideal $\aid\subset\A_\HH$ corresponds the isogeny which is
  the monic generator of the left-ideal in $\Ltau$ generated by $\{g(\phi_X,
  \tau_L) : g\in \aid\}$. We refer to \cite[§(3.6)]{gek91} for more details.

  To prove the second statement, we start by letting $\Xi$ denote the
  isomorphism between $\End(\phi)$ and $\End(\psi)$ which sends $g(\phi_X,
  \tau_L)$ to $g(\psi_X, \tau_L)$ for any $g\in\A_\HH$. Let $\widehat\iota$ be a
  $u$-dual isogeny for $\iota$, for some $u\in\Fq[X]$ such that $\iota$
  right-divides $\phi_u$ (see Section~\ref{subsec:preli::finite}). Notice that for
  all $g\in\A_\HH$, $\phi_u$ right-divides $g(\phi_X, \tau_L)\cdot\widehat\iota$
  if and only if $\psi_u$ left-divides $\widehat\iota\cdot g(\psi_X, \tau_L)$.
  Said otherwise, $\Xi$ sends the ideal $\Hom(\psi,\phi)\iota\subset\End(\phi)$
  to the ideal $\iota\Hom(\psi,\phi)\subset\End(\psi)$.
  By considering the isomorphism $\Xi_{1, 2}:\End(\phi_1)\to\End(\phi_2)$ and by
  using the commutativity of $\End(\phi_2)$, we obtain
  \[\begin{array}{rcl}
      \Hom(\phi_3,\phi_2)\iota_2\cdot\Xi_{1,2}(\Hom(\phi_2,\phi_1)\iota_1)  &=&
      (\Hom(\phi_3,\phi_2)\iota_2)\cdot(\iota_1\Hom(\phi_2,\phi_1))\\
      &=& (\iota_1\Hom(\phi_2,\phi_1))\cdot (\Hom(\phi_3,\phi_2)\iota_2)\\
      &=& \Xi_{1,2}(\Hom(\phi_3,\phi_2)\Hom(\phi_2,\phi_1)\iota_2\iota_1)\\
      &\subset& \Xi_{1,2}(\Hom(\phi_3,\phi_1)\iota_2\iota_1).
  \end{array}\]

  To conclude, we use the properties of the norm of isogenies: the norm is
  multiplicative~\cite[Lem.~3.10.(i)]{gek91} and it corresponds to the norm of
  the associated ideal in $\A_\HH$~\cite[Lem.~3.10.(iv)]{gek91}. Consequently,
  the norms on both sides of the inclusion are equal. This implies that the last
  inclusion is in fact an equality.
\end{proof}

Algorithm~\ref{algo:isogeny-to-ideal} (\textsc{IsogenyToIdeal}) computes prime factors of the ideal in $\A_\HH$
corresponding to the given isogeny (of norm coprime to $\mathfrak p$), in order to recover the full factorization.
Each prime non-principal factor is treated independently by the subroutine
\textsc{PrimeIsogenyToPrimeIdeal}
(Algorithm~\ref{algo:prime-isogeny-to-prime-ideal}).

\begin{algorithm}\label{algo:prime-isogeny-to-prime-ideal}
	\caption{\PrimeIsogenyToPrimeIdeal}
  \KwIn{
    \begin{itemize}
      \item An ordinary Drinfeld module $\phi\in \Dr_2(\Fq[X], L)_\xi$,
      \item A monic prime $r\in\Fq[X]$ such that $r\notin\mathfrak p$,
      \item An $r$-isogeny $\iota:\phi\to\psi$ between
	$\phi,\psi\in\Dr_2(\Fq[X],
        L)_\xi$.
    \end{itemize}
    \vspace{-\topsep}
        }
	\KwOut{A polynomial $v\in\Fq[X]$ such that the left-ideal $\langle
	  \phi_{r}, \tau_L - \phi_{v}\rangle\subset
	\Ltau$ is generated by $\iota$.}
  $y \gets$ remainder in the right-division of $\tau_L$ by $\iota$\; \label{step:primeisogeny:1}
  $\iota^{(0)} \gets 1$ \;
  \For {$1 \leqslant n \leqslant \deg(r)$}{
    $\iota^{(n+1)} \gets$ remainder in the right-division $\phi_T \cdot \iota^{(n)}$ 
    by $\iota$ \;
  }
  using linear algebra, find $(v_0, \dots, v_{\deg(r)-1}) \in
  \Fq^{\deg(r)}$ such that
  $y - (v_0 \iota^{(0)} + \dots + v_{\deg(r)-1} \iota^{(\deg(r)-1)}) = 0$\label{step:v_def}\; 
  \KwRet $v_0 + v_1 X +\dots + v_{\deg(r)-1} X^{\deg(r)-1}$.
\end{algorithm}

\begin{algorithm}\label{algo:isogeny-to-ideal}
	\caption{\IsogenyToIdeal}
  \KwIn{
    \begin{itemize}
      \item An ordinary Drinfeld module $\phi\in \Dr_2(\Fq[X], L)_\xi$,
      \item A (non-necessarily prime) monic polynomial $u\in\Fq[X]$, such that
        $u\notin\mathfrak p$,
      \item A $u$-isogeny $\iota:\phi\to\psi$ between ordinary Drinfeld
	modules in $\Dr_2(\Fq[X], L)_\xi$,
    \end{itemize}
    \vspace{-\topsep}
        }
        \KwOut{A factorization of the ideal $\aid\subset \Fq[X,Y]/(\xi)$ associated to $\iota$ in
    Lemma~\ref{lem:corres_id_isog}.}
  \If{$u = 1$}{\KwRet $\Fq[X,Y]/(\xi)$.}
  $r\gets$ a nonconstant monic prime factor of $u$\; \label{step:isogenytoideal:factor}
  $\widetilde\iota \gets \rgcd(\iota, \phi_r)$\; \label{step:isogenytoideal:rgcd}
  \If{$\widetilde\iota = 1$}{\KwRet $\textsc{IsogenyToIdeal}(\phi,
  u/r^{\val_r(u)}, \iota)$.}
  \ElseIf{$\widetilde\iota = \lambda\phi_r$ for some $\lambda\in L^\times$}{
    \KwRet $\langle r(\overline X)\rangle\cdot\textsc{IsogenyToIdeal}(\phi,
    u/r, \iota\cdot \phi_r^{-1})$.}
  \Else{
      $v\gets \textsc{PrimeIsogenyToPrimeIdeal}(\phi, r, \widetilde\iota)$\;
      $\widetilde\phi\gets$ the codomain of $\widetilde\iota$,
      computed from $\phi$ and $\widetilde\iota$ with Formulas~\eqref{eq:Velu}\;
      \KwRet $\langle u(\overline X), \overline Y - v(\overline X)\rangle\cdot 
      \textsc{IsogenyToIdeal}(\widetilde\phi, u/r, \iota\cdot
      \widetilde\iota^{-1})$.
    }
\end{algorithm}

In what follows, $\theta$ is a feasible exponent for matrix multiplication in
$L$, satisfying
$2 \leqslant \theta \leqslant 3$.

Algorithm~\ref{algo:isogeny-to-ideal} involves the factorization of a polynomial $u \in
\Fq[X]$. We choose to use the Cantor-Zassenhaus algorithm
\cite{cantor-zassenhaus}, a Las Vegas probabilistic algorithm with expected complexity
bounded above by $\widetilde O(\delta^2 + \delta \log q)$, where $\delta$ is the
degree of the input. Another possibility was to use Berlekamp's algorithm,
which is deterministic. However, with a complexity dominated by
$\delta^\theta$, its use would severely hinder the overall complexity of the
algorithm. Finally, the complexities for
Algorithms~\ref{algo:prime-isogeny-to-prime-ideal} and
\ref{algo:isogeny-to-ideal} will be expressed in terms of $d$, $q$, and the
degree of the input polynomial $r$ (resp. $u$). As $\iota$ is an $r$-isogeny
(resp. $u$), its degree is bounded by that of $r$ (resp. $u$).

Before proving the correctness of
Algorithm~\ref{algo:prime-isogeny-to-prime-ideal}, we need the following
technical lemma:
\begin{lemma}\label{lem:separable-exists}
	If there exists an isogeny of norm $r\notin\mathfrak p$ between two finite
  Drinfeld $\A$-modules $\phi$ and $\psi$, then $\rgcd\left(\Hom(\psi,\phi)\right) = 1$.
\end{lemma}

\begin{proof}
  Let $f:\phi\to\psi$ be an $r$-isogeny, with $r\notin\mathfrak p$. Set $V = \bigcap_{u \in \Hom(\psi, \phi)}
	\Ker(u)$, and let $g$ be an isogeny in $\Hom(\psi, \phi)$. The sequence of
	$A$-modules $0 \to V \to \Ker(g) \to \Ker(g)/ V \to 0$ is exact, so that
  $\chi(V)$ divides $\chi(\Ker(g))$, where $\chi$ is the Euler-Poincaré
  characteristic, see Section~\ref{subsec:preli::finite}. Consequently,
	$\chi(V)\pid^{\height(g)/\deg(\pid)}\isonorm(f)$ divides $\isonorm(fg)$. In
	particular, $\chi(V)\isonorm(f)\mid\isonorm(fg)$. By
	\cite[Lem.~3.10.(iv)]{gek91}, we have $\sum_{g\in\Hom(\psi, \phi)}\isonorm(fg)
	= \isonorm(f)$. Since $\isonorm(f)\ne (0)$, $\chi(V)$ must equal $\A$ and
	hence $V=0$.

%	We now prove that there exists a separable isogeny $g\in\Hom(\psi, \phi)$. Let
%	$h = \min{\height(g)}$ be the minimal height over all isogenies $g:\psi \to
%	\phi$. Then $\pid^{h/\deg(\pid)}\isonorm(f)\mid\isonorm(fg)$ for all isogenies
%	$g$. By the same argument as above, we conclude that $\pid^h = \A$, thus $h =
%	0$.

  Then $\Ker(\rgcd(\Hom(\psi, \phi))) = V$ is trivial, which implies
  that $\rgcd(\Hom(\psi, \phi))$ divides $\tau^{\deg(\pid)\ell}$ for some
  $\ell\in\N$. Since $r\notin\mathfrak p$, the $r$-dual $\hat f$ of $f$ is separable
  (it has norm $r\notin\mathfrak p$), hence $\rgcd\left(\Hom(\psi, \phi)\right)= 1$.
\end{proof}

\begin{proposition}\label{prop:prime-isogeny-to-prime-ideal}
  Algorithm~\ref{algo:prime-isogeny-to-prime-ideal}
  (\textsc{PrimeIsogenyToPrimeIdeal})
  is correct.
\end{proposition}

\begin{proof}
  First, we notice that since $r$ is prime, the norm of $\iota$ must be the
  ideal $(r)\subset\Fq[X]$, and hence $\deg_\tau(\iota) = \deg(r)$.
	Since $\iota$ is an $r$-isogeny, $\phi_r\in \Hom(\psi, \phi)\iota$. Since
	$\A_\HH$ is a Dedekind ring in a quadratic extension of $\Fq(X)$, the ideal $\Hom(\psi, \phi)\iota$ --- regarded as an ideal in $\A_\HH$ by
	Lemma~\ref{lem:corres_id_isog} --- contains the prime $r$. Therefore, it can
	only be either the full ring $\A_\HH$, the principal ideal $\langle r\rangle$,
	or a prime ideal of degree $1$ above $\langle r\rangle$.

	By Lemma~\ref{lem:separable-exists}, the left-ideal in $\Ltau$ generated by
	elements in $\Hom(\psi, \phi)\iota$ equals $\Ltau\iota$, which is neither the
	full ring $\Ltau$, nor $\Ltau\phi_r$ since $\deg_\tau(\phi_r) =
	2\deg(r)>\deg(\iota)$. Consequently, using the correspondence in
	Lemma~\ref{lem:corres_id_isog}, $\Hom(\psi, \phi)\iota$ must be a degree-$1$
	prime ideal above the principal ideal associated to $r$. Said otherwise, the
	polynomial $Y^2+h(X)Y - f(X)$ factors over $(\Fq[X]/(r))[Y]$, and a prime
	ideal above $\langle r\rangle$ in $\A_\HH$ has the form $\langle r(\overline
	X), Y-v(\overline X)\rangle$, where $v\in\Fq[X]$ satisfies $\xi(\overline X,
	\overline v)=0$ in $\Fq[X]/(r)$. Note that up to reducing $v$ modulo $r$, we
	can assume that $\deg(v) < \deg(r)$; under this assumption, $v$ is uniquely
	defined.

	We now prove that the coefficients of $v$ satisfy the equality in
	Step~\ref{step:v_def}, so that it can indeed be computed via linear algebra.
	To this end, we need to prove that $\iota$ right-divides $\tau_L -
	\phi_v$. This is a direct consequence of the fact that the ideal $\Hom(\psi,
	\phi)\iota\subset\End(\phi)$ corresponds to the ideal $\langle r(\overline X),
	Y-v(\overline X)\rangle\subset \A_\HH$.
\end{proof}

Algorithm~\ref{algo:isogeny-to-ideal} needs as input a polynomial
$u\in\Fq[X]$ such that $\iota$ right-divides $\phi_u$. It can be
found by looking for a non-trivial $\Fq$-linear relation between the remainders
of $\phi_{X^0},\phi_{X^1},\ldots,\phi_{X^\ell}$ in the right-division by
$\iota$. When $\ell\geq \deg_{\tau}(\iota)$, such a non-trivial linear
combination exists.

\begin{proposition}\label{prop:isogeny-to-ideal}
	Algorithm~\ref{algo:isogeny-to-ideal} (\textsc{IsogenyToIdeal})
	terminates and is correct.
\end{proposition}

\begin{proof}
  The proof is done by induction on the degree of $u$. The termination comes
  from the fact that the degree of $u$ decreases in each recursive call.

  By Lemma~\ref{lem:corres_id_isog}, there is a uniquely defined ideal $\aid\subset
  \A_\HH$ corresponding to $\iota$. Since $\A_\HH$ is Dedekind
  (Lemma~\ref{lem:pic0-clB}), $\aid$ factors as a product of prime ideals. For
  $r\in\Fq[X]$ an irreducible polynomial, we let $\aid_r$ denote the product of all
  primes in the factorization of $\aid$ which contain $\overline r\in\A_\HH$.
  Consequently, since $\overline u\in \aid$, we have \[\aid = \prod_{\substack{r\text{
  prime}\\r\text{ divides }u}} \aid_r.\] Let $r$ be a prime factor of $u$. Then
  there are three possible cases, depending on whether $r$ is inert, splits, or
  ramifies in $\A_\HH$.

  If $r$ is inert, then $\aid_r = \langle \overline r\rangle^\ell$ for some
  $\ell\geq 0$. If $\ell = 0$ then $\aid_r = \A_\HH$. In this case, if $u\ne
  1$, then $\overline
  r\notin \aid$ and therefore $\rgcd(\iota, \phi_r)=1$. Consequently, $\overline r$
  is invertible in $\aid$, and therefore $\overline u/\overline r^{\val_r(u)}$
  belongs to $\aid$ and we can apply our induction hypothesis. If $\ell>0$, then
  $\overline r$ divides all elements in $\aid$. Therefore $\phi_r$ right-divides
  $\iota$ and hence $\widetilde\iota=\lambda\phi_r$ for some $\lambda\in
  L^\times$. Since $\phi_r$ is an endomorphism of $\phi$, $\iota\cdot
  \phi_r^{-1}$ is a well-defined isogeny between $\phi$ and $\psi$ and its
  corresponding ideal in $\A_\HH$ is $\{g : g\in\A_\HH\mid g\cdot \overline r\in
  \aid\}$. This ideal contains $\overline u/\overline r$, hence we can apply
  our induction hypothesis.

  If $r$ splits then the ideal $\langle \overline r\rangle\subset\A_\HH$ factors
  as a product $\pid_1\cdot \pid_2$ of two distinct prime ideals. Therefore, $\aid_r =
  \pid_1^\alpha\cdot \pid_2^\beta$ for some $\alpha,\beta\geq 0$. First, if both
  $\alpha$ and $\beta$ are nonzero, then $\aid_r = \langle \overline r\rangle\cdot
  \pid_1^{\alpha-1} \pid_2^{\beta-1}$. Consequently, $\iota$ is right-divisible by
  $\phi_r$, $\widetilde\iota = \lambda\phi_r$ for some $\lambda\in L^\times$ and
  we can apply our induction hypothesis on the isogeny $\iota\cdot
  \phi_r^{-1}$. Now, we study the case where either $\alpha$ or $\beta$ is zero.
  Without loss of generality, let us assume that $\beta = 0$. Then $\aid_r =
  \pid_1^\alpha$. In this case, $\widetilde\iota$ cannot be right-divisible by
  $\phi_r$: this would contradict the fact that $\langle \overline r\rangle$
  does not divide $\aid$. On the other hand, $\widetilde\iota$ cannot equal $1$
  since for any element $g\in \pid_1$, $g(\phi_X, \tau_L)$ must right-divide both
  $\phi_r$ and $\iota$. Since $\iota$ is an isogeny, $\Ker(\iota)$ is an
  $\Fq[X]$-submodule of $\Lbar$ (for the module law induced by $\phi$), and
  hence so is $\ker(\widetilde\iota)=\ker(\iota)\cap \ker(\phi_r)$.
  Consequently, $\widetilde\iota$ is an isogeny from $\phi$ to some other
  Drinfeld module $\phi'\in\Dr_2(\Fq[X], L)_\xi$. The Drinfeld module $\phi'$
  can be computed using Formulas~\eqref{eq:Velu}, and the ideal corresponding to
  this isogeny can be computed using
  Algorithm~\ref{algo:prime-isogeny-to-prime-ideal}, which is correct by
  Proposition~\ref{prop:prime-isogeny-to-prime-ideal}. To apply the induction
  hypothesis on $\ell$, it remains to prove that $\iota' \coloneqq
  \iota\cdot\widetilde\iota^{-1}$ defines an isogeny $\iota':\phi'\to\psi$ which
  right-divides $\phi'_{u/r}$.
  To this end, let $\iota_{\rm dual}$ denote the dual $u$-isogeny of $\iota$,
  and let $\widetilde\iota_{\rm dual}$ be the dual $r$-isogeny of
  $\widetilde\iota$. We have \[\begin{array}{rcccccl} \phi'_u\phi'_r &=&
      \widetilde\iota\cdot \widetilde\iota_{\rm dual}\cdot\phi'_{u} &=&
      \widetilde\iota\cdot\phi_{u}\cdot \widetilde\iota_{\rm dual} &=&
      \widetilde\iota\cdot\iota_{\rm dual}\cdot\iota\cdot \widetilde\iota_{\rm
      dual}\\ &=& \widetilde\iota\cdot\iota_{\rm
      dual}\cdot\iota'\cdot\widetilde\iota\cdot \widetilde\iota_{\rm dual} &=&
    \widetilde\iota\cdot\iota_{\rm dual}\cdot\iota'\cdot\phi'_r.&& \end{array}\]
  By dividing on the right by $\phi'_r$, we obtain that $\iota'$ divides
  $\phi_u$ and that it is the $u$-dual of the composed isogeny
  $\widetilde\iota\cdot\iota_{\rm dual}$. This proves that $\iota'$ is a
  well-defined isogeny. By using the second statement in
  Lemma~\ref{lem:corres_id_isog}, we obtain that the ideal associated to
  $\iota'$ is \[\pid_1^{\alpha-1}\cdot \prod_{\substack{r'\text{ prime}\\r'\text{
  divides }u\\ r'\ne r}} \aid_{r'},\] which contains $\overline u/\overline r$,
  so that we can apply our induction hypothesis.

  Finally, the ramified case is proved similarly than the split case. The main
  difference is that $\pid_1 = \pid_2$, so that $\aid_r = \langle r\rangle^\ell\cdot
  \pid_1^\alpha$, for some $\ell\geq 0$ and $\alpha\in\{0,1\}$; this does not
  change the proof.
\end{proof}

\begin{proposition}
\label{prop:complexite:primeisogeny}

  Let $m$ denote the degree of $r$.
  Algorithm~\ref{algo:prime-isogeny-to-prime-ideal} (\PrimeIsogenyToPrimeIdeal) requires $O(dm^\theta)$
  operations in $L$ and $O(dm +m^2)$ applications of the Frobenius
  endomorphism.

  % Under the hypotheses that $q$ is fixed and that operations in $K$ and Frobenius applications cost
  % $\widetilde O (d)$ operations in $\Fq$, one can use instead
  % the probabilistic Las Vegas algorithms in \cite{caruso-le_borgne-fact}
  % for the computation of the rgcd.
  % Then the expected complexity of Algorithm~\ref{algo:prime-isogeny-to-prime-ideal} is
  % bounded above by
  % $\widetilde O (d^{\frac{\theta + 3}{2}} +
  % d\,m^{\frac{\theta + 3}{2}})$ operations in $L$ and $O(m^3)$ applications of
  % the Frobenius endomorphism, where the constant in
  % $\widetilde O$ depends on $q$.

  % Algorithm~\ref{algo:prime-isogeny-to-prime-ideal} requires $O(d^2 + m(m^2 +
  % nm + n^2) + (dn + m)^\theta)$ operations in $L$ and $O(m^4 + d^3 + n(n^2 +
  % dn + d^2) + m^2(nm + n^2))$ applications of the Frobenius endomorphism $x
  % \mapsto x^q$ of $L$.

\end{proposition}

\begin{proof}

  Computing the first remainder costs $O(dm)$ operations in $L$, and $O
  (dm)$ applications of the Frobenius endomorphism. The other remainders
  are computed recursively. Knowing $\iota^{(n)}$, computing $\iota^{(n+1)} = \phi_T \cdot
  \iota^{(n)}$ requires $O(m)$ operations in $L$, and the same number of
  Frobenius applications. This Ore polynomial has degree at most
  $\deg_\tau(\iota) + 1$. By Lemma
  \ref{lem:compl-euclidean-division},computing this remainder
  requires $O(\deg_\tau(\iota)) = O(m)$ operations in $L$, and as much applications of
  the Frobenius. Consequently, computing all elements in the loop requires
  $O(m^2)$ operations in $L$ and $O(m^2)$ applications of the Frobenius
  endomorphism.

  % The values $\phi_{X^0}, \dots, \phi_{X^{m - 1}}$ can be iteratively
  % computed. As in the proof of Proposition~\ref{prop:complexity:groupaction},
  % this requires $O(m^2)$ operations in $L$ and $O(m^3)$ applications of the Frobenius
  % endomorphism. This pre-computation made, one computes the Ore Euclidean
  % divisions of Steps~\ref{step:primeisogeny:1} and \ref{step:primeisogeny:2},
  % accounting for $O(d^2+m^3)$
  % operations in $L$ and $O(d^3+m^4)$
  % applications of the Frobenius endomorphism.

  The last costly step is solving a linear system. More precisely, the
  algorithm finds a solution of an affine system over $\Fq$, whose associated
  matrix has less than $dm$ rows, and $m$ columns. Solving such a system
  requires $O(d m^\theta)$ operations in $L$. In total, we get $O(dm^\theta)$
  operations in $L$, and $O(dm + m^2)$ applications of the
  Frobenius endomorphism.

  % The complexity analysis of the second case (for fixed $q$) is similar, except that the cost of
  % computing the Ore Euclidean division of two Ore polynomials with degree $\leq
  % \delta$ is $\widetilde O (\delta^{\frac{\theta + 1}{2}} d)$ expected operations in $L$
  % and no application of the Frobenius endomorphism ~\cite[Prop.~3.1]{caruso-le_borgne-mult}.
\end{proof}

\begin{proposition}
  \label{prop:complexite:isogeny}

  Let $m$ denote the degree of $u$. Using the Cantor-Zassenhaus algorithm for
  polynomial factorization, Algorithm~\ref{algo:isogeny-to-ideal}
  (\IsogenyToIdeal) is a
  probabilistic Las Vegas algorithm requiring $\widetilde O(d m^\theta+  m^3 +
  m \log(q))$ expected operations in $L$ and $O(dm+m^3)$ expected applications of the
  Frobenius endomorphism.

  % For fixed $q$, using instead \cite{caruso-le_borgne-fact} for Ore Euclidean
  % division, the complexity is bounded above by $\widetilde O(d^{\frac{\theta + 1}{2}} +
  %   d\,m^{\frac{\theta + 3}{2}} + m
  % \log(q))$ expected operations in $L$ and $O(m^3)$ applications of the Frobenius
  % endomorphism.

\end{proposition}

\begin{proof}
  Step~\ref{step:isogenytoideal:factor} is performed using the
  Cantor-Zassenhaus algorithm, with expected cost bounded by $\widetilde O(m^2 +
  m\log(q))$. Notice also that
  the initial factorization of $u$ may be performed only once for this cost, at the first
  call of the algorithm.  Then Step~\ref{step:isogenytoideal:rgcd} performs an Ore Euclidean division,
  which costs $O(m^2)$
  operations in $L$ and $O(m^3)$ applications of the Frobenius endomorphism
  using Euclid's algorithm (Lemma~\ref{lem:complexite-euclide-rgcd}).

%  When $u$ is prime: in that case, it just calls Algorithm~\ref{algo:prime-isogeny-to-prime-ideal}, which costs $\widetilde O
%  (d^3+m^4)$ applications
%  of the Frobenius endomorphism and $\widetilde O (d^2+d\, m^\theta+m^3)$ operations in $L$. 

  If $\widetilde\iota$ is $1$ or $\lambda \phi_r$, we just need to compute a polynomial
  division, and $\iota\cdot\phi_r^{-1}$ in the latter case. These
  computations do not exceed the complexity of
  Step~\ref{step:isogenytoideal:rgcd}. No other computation is performed and
  the algorithm is recursively called on a smaller instance. If $\widetilde\iota$ is neither $1$ or $\lambda \phi_r$, then
  Algorithm~\ref{algo:prime-isogeny-to-prime-ideal} is called. Let $u = r_1 \cdots
  r_\ell$ be a factorization of $u$ ($r_i$ prime, not necessarily distinct),
  and let $k_i:=\deg(r_i)$. We can
  assume that the prime factors are ordered as the algorithm
  processes them. Then, counting all the recursive calls of the algorithm, we
  get that the total expected cost is bounded above by
  $$ \begin{array}{c}\displaystyle\widetilde O(m^2 +  m\log(q)) + \sum_{i=1}^{\ell}
    \left(O\left(\sum_{j=i}^{\ell} k_j^2\right) + O(d\,k_i^\theta)\right)
    \text{\quad operations in $L$,}\\
    \displaystyle\sum_{i=1}^{\ell}\left(O\left(\sum_{j=i}^{\ell}
    k_j^2\right)+ O(dk_i+k_i^2) \right)\text{\quad applications of
    the Frobenius endomorphism.}
\end{array}$$
Since $\sum_{i=1}^\ell k_i = m$, we obtain that these formulas are bounded
above by 
  $\widetilde O(d m^\theta+  m^3 + m \log(q))$ expected operations in
  $L$ and $O(dm+m^3)$ expected applications of the Frobenius endomorphism.

\end{proof}

\begin{remark}
\label{rem:caruso-le_borgne}

  One could ask whether the complexities of Propositions~\ref{prop:complexity:groupaction}, 
  \ref{prop:complexite:primeisogeny} and
  \ref{prop:complexite:isogeny} may be enhanced by using more efficient
  algorithmic primitives for the arithmetic of Ore polynomials. In \cite{caruso-le_borgne-fact}
  and \cite{caruso-le_borgne-mult}, Caruso and Le Borgne provide new algorithms
  for Ore polynomial multiplication, Euclidean division and right-greatest
  common divisor; in many applications, those algorithms yield substantial
  speed-ups. We highlight that the authors work under the hypothesis that $q$
  is fixed and that operations in $L$ as well as applications of the Frobenius
  and $L$ cost $\widetilde O(d)$ operations in $\Fq$.

  Let us first ask ourselves if we can enhance Algorithm~\ref{algo:groupaction}
  by using asymptotically fast Ore Euclidean division at Step~\ref{step:iota}.
  Per \cite[Prop.~3.1]{caruso-le_borgne-mult}, computing $\iota$ would cost
  $\widetilde O (\SMgeq(d, d))$ operations in $\Fq$, where $\SMgeq$ is a
  function introduced in \cite[Sec.~3]{caruso-le_borgne-mult}. Using the values provided by
  the authors\footnote{Private discussions 
  with the authors revealed that the critical
  exponent $\frac{5 - \theta}{2}$ was mistyped in \cite{caruso-le_borgne-mult}, and should instead be
$\frac{2}{5 - \theta}$.}, we get $\SMgeq(d, d) = d^{\frac{9 - \theta}{5 -
\theta}}$. Even if we used $\theta = 2$, we would get $\frac{9 - \theta}{5 -
\theta} = \frac 7 3$, and the
  computation of $\iota$ would then be outweighed by the computations of
  $\tilde u$ and $\tilde v$, which both cost $\Otilde(d^3)$ operations in $\Fq$ in the complexity model of
  \emph{loc.~cit}. Therefore,
  using the algorithmic primitives of \cite{caruso-le_borgne-fact} and
  \cite{caruso-le_borgne-mult} does not at the moment improve the complexity
  bound
  in Proposition~\ref{prop:complexity:groupaction}. To benefit from
  those, one would need to enough reduce the cost of computing $\tilde u$ and $\tilde
  v$. Our attempts to do so were unsuccessful.

  The situation for Algorithm~\ref{algo:prime-isogeny-to-prime-ideal} is quite
  similar. The first step requires computing the remainder in the Euclidean
  division of $\tau_L$ by $\iota$. As $\iota$ has degree $O(m)$ and $\tau_L$
  has degree $d$, the computation would require $\widetilde O (\SMgeq(d + m, d))$
  operations in $\Fq$. With the formula for $\SMgeq$ in \cite[Sec.~3]{caruso-le_borgne-mult}, this is
  $\widetilde O ((d+m)d^{\frac{4}{5 - \theta}})$ operations in $\Fq$. Adding
  to that the $O(d^2m^\theta)$ operations in $\Fq$ required to solve the
  system, there is no benefit in using the algorithms of
  \cite{caruso-le_borgne-mult}. In fact, doing so would actually worsen the
  asymptotic complexity with respect to the variable $d$. This is due to the
  fact that the bound is linear with respect to $d
  + m$ for fixed $d$, but it has a costly dependence with respect to $d$.

  The complexity of Algorithm~\ref{algo:isogeny-to-ideal} depends on that of
  Algorithm~\ref{algo:prime-isogeny-to-prime-ideal}, and our conclusion is the
  same.

\end{remark}

\subsection{An explicit computation}

To demonstrate the practical effectivity of Algorithm~\ref{algo:groupaction}
(\GroupAction), we have implemented the group action for a hyperelliptic curve
of genus $260$ defined over $\mathbb F_2$. 

Our C++/NTL code is available at
\url{https://gitlab.inria.fr/pspaenle/crs-drinfeld-521}.

Set $L = \mathbb F_2[X] / \mathfrak p$,
where $\pid$ is the ideal generated by $X^{521}+X^{32}+1\in\mathbb
F_2[X]$. We encode polynomials using the hexadecimal NTL
notation: for
instance, $\texttt{0x4bc}$ denotes $X^2 + X^4 + X^5 + X^7 + X^{10} +
X^{11}\in\mathbb F_2[X]$. By extension, we also denote elements in $L$ by the NTL
hexadecimal convention, implicitly using the reduction modulo the ideal $\pid$.
Our isomorphism class of Drinfeld modules has $\j$-invariant (in $L$)
$$j_0 =\begin{array}{l}
\texttt{0xb985b4ce23bd9cf992f1176e17c27dab7ae67270131}\\
\texttt{\phantom{0x}12a2804cb64abccc7cce061e12786bb3248809922da}\\
\texttt{\phantom{0x}35d3b624d67d08087e07c260fcaa9807a420ca83fa95}.\end{array}$$

The coefficients of the characteristic
polynomial of the Frobenius endomorphism of the Drinfeld module
$\phi\in\Dr_2(\F_2[X], L)$ defined by $\phi_X = j_0^{-1}\tau^2 +\tau+\omega$ are:
\[\begin{cases}
  h =
  \begin{array}{l}\texttt{0xb1ffea4ab7e58b96adf4e4972d7db918}\\
    \texttt{\phantom{0x}4821c1d64b375df52669c60973bb80dee}
  \end{array}\in
  \mathbb F_2[X],\\
  f = X^{521}+X^{32}+1\in\mathbb F_2[X].
\end{cases}\]

The polynomial $Y^2 + h(X) Y - f(X)$ defines a genus-$260$ hyperelliptic curve $\HH$ over
$\F_2$, whose Picard group $\Pic^0(\HH)$ is cyclic and has almost-prime order
\scriptsize
\[2\times
315413182467545672604116316415047743350494962889744865259442943656024073295689.\]
\normalsize

This group order was computed using the Magma implementation of the
Denef-Kedlaya-Vercauteren algorithm
\cite{kedlaya2001counting, denef2006extension}. This computation costs 53
hours on a Intel(R) Xeon(R) CPU E7-4850.

We ran experiments for computing the group action on a laptop (Intel
  i5-8365U@1.60GHz CPU, $8$ cores,
16\,GB RAM). We chose an element of $\Pic^0(\HH)$ at random such that the
$u$-polynomial in the Mumford coordinates is irreducible and has degree $35$. The most costly step in practice is the first step of Euclid's
algorithm: it starts by computing $\tau^{521}$ modulo $\phi_u$, which has
$\tau$-degree $70$. Unfortunately, in our non-commutative setting we cannot use
binary exponentiation to speed-up this step: $(P_1 + \Ltau Q)\cdot (P_2+\Ltau
Q)$ need not equal $P_1\, P_2 + \Ltau Q$ for $P_1, P_2,Q \in\Ltau$. Therefore, we implemented a
parallelized
subroutine specialized for this task.
By
using the $8$ cores of the laptop, computing this group action takes 24\,ms. 

\bibliographystyle{abbrv}
\bibliography{bi}

\begin{thebibliography}{10}

\bibitem{bae1992singular}
S.~Bae and J.~K. Koo.
\newblock On the singular {D}rinfeld modules of rank 2.
\newblock {\em Mathematische Zeitschrift}, 210(1):267--275, 1992.

\bibitem{bombar}
M.~Bombar, A.~Couvreur, and T.~Debris-Alazard.
\newblock On codes and {L}earning {W}ith {E}rrors over function fields.
\newblock In {\em Advances in Cryptology -- CRYPTO 2022}, pages 513--540.
  Springer, 2022.

\bibitem{cantor-zassenhaus}
D.~G. Cantor and H.~Zassenhaus.
\newblock A new algorithm for factoring polynomials over finite fields.
\newblock {\em Mathematics of computation}, 36(154):587--592, 1981.

\bibitem{caranay-thesis}
P.~Caranay.
\newblock {\em Computing Isogeny Volcanoes of Rank Two {D}rinfeld Modules}.
\newblock PhD thesis, University of Calgary, 2018.

\bibitem{caranay-article}
P.~Caranay, M.~Greenberg, and R.~Scheidler.
\newblock Computing modular polynomials and isogenies of rank two {D}rinfeld
  modules over finite fields.
\newblock {\em Contemporary Mathematics}, 754:293--314, 2020.

\bibitem{caruso-le_borgne-mult}
X.~Caruso and J.~Le~Borgne.
\newblock Fast multiplication for skew polynomials.
\newblock {\em Proceedings of the 2017 ACM on International Symposium on
  Symbolic and Algebraic Computation}, pages 77--84, 2017.

\bibitem{caruso-le_borgne-fact}
X.~Caruso and J.~Le~Borgne.
\newblock A new faster algorithm for factoring skew polynomials over finite
  fields.
\newblock {\em Journal of Symbolic Computation}, 79:411--443, 2017.

\bibitem{csidh}
W.~Castryck, T.~Lange, C.~Martindale, L.~Panny, and J.~Renes.
\newblock {CSIDH}: An efficient post-quantum commutative group action.
\newblock In {\em Asiacrypt 2018}, pages 395--427. Springer, 2018.

\bibitem{cohen2005handbook}
H.~Cohen, G.~Frey, R.~Avanzi, C.~Doche, T.~Lange, K.~Nguyen, and
  F.~Vercauteren.
\newblock {\em Handbook of Elliptic and Hyperelliptic Curve Cryptography}.
\newblock CRC, 2005.

\bibitem{couveignes2006hard}
J.-M. Couveignes.
\newblock Hard homogeneous spaces.
\newblock Cryptology ePrint Archive, Report 2006/291,
  \url{https://ia.cr/2006/291}, 2006.

\bibitem{cox_primes_2013}
D.~A. Cox.
\newblock {\em Primes of the form $x^2+ny^2$: {Fermat}, Class Field Theory, and
  Complex Multiplication}.
\newblock Wiley, 2nd edition, Apr. 2013.

\bibitem{deligne1987survey}
P.~Deligne and D.~Husemoller.
\newblock Survey of {D}rinfel'd modules.
\newblock {\em Contemporary mathematics}, 67:25--91, 1987.

\bibitem{denef2006extension}
J.~Denef and F.~Vercauteren.
\newblock An extension of {K}edlaya's algorithm to hyperelliptic curves in
  characteristic 2.
\newblock {\em Journal of cryptology}, 19(1):1--25, 2006.

\bibitem{doliskani2021drinfeld}
J.~Doliskani, A.~K. Narayanan, and {\'E}.~Schost.
\newblock Drinfeld modules with complex multiplication, {H}asse invariants and
  factoring polynomials over finite fields.
\newblock {\em Journal of Symbolic Computation}, 105:199--213, 2021.

\bibitem{drinfel1974elliptic}
V.~G. Drinfel'd.
\newblock Elliptic modules.
\newblock {\em Mathematics of the USSR-Sbornik}, 23(4):561--592, 1974.

\bibitem{dummit1994rank}
D.~Dummit and D.~Hayes.
\newblock Rank-one {D}rinfeld modules on elliptic curves.
\newblock {\em Mathematics of Computation}, 62(206):875--883, 1994.

\bibitem{garai2022computing}
S.~Garai and M.~Papikian.
\newblock Computing endomorphism rings and {F}robenius matrices of {D}rinfeld
  modules.
\newblock {\em Journal of Number Theory}, 237:145--164, 2022.

\bibitem{gekeler1983}
E.-U. Gekeler.
\newblock Zur arithmetik von {D}rinfeld-moduln.
\newblock {\em Mathematische Annalen}, 262(2):167--182, 1983.

\bibitem{gek91}
E.-U. Gekeler.
\newblock On finite {D}rinfeld modules.
\newblock {\em Journal of algebra}, 1(141):187--203, 1991.

\bibitem{gos98}
D.~Goss.
\newblock {\em Basic Structures of Function Field Arithmetic}.
\newblock Springer, 1998.

\bibitem{hayes}
D.~R. Hayes.
\newblock A brief introduction to {D}rinfeld modules.
\newblock In {\em The Arithmetic of Function Fields: Proceedings of the
  Workshop at the Ohio State University, June 17-26, 1991}, pages 1--32. De
  Gruyter, 1991.

\bibitem{sidh}
D.~Jao and L.~De~Feo.
\newblock Towards quantum-resistant cryptosystems from supersingular elliptic
  curve isogenies.
\newblock In {\em Post-Quantum Cryptography}, pages 19--34. Springer, 2011.

\bibitem{jn}
A.~Joux and A.~K. Narayanan.
\newblock Drinfeld modules may not be for isogeny based cryptography.
\newblock Cryptology ePrint Archive, \url{https://ia.cr/2019/1329}, 2019.

\bibitem{kedlaya2001counting}
K.~S. Kedlaya.
\newblock Counting points on hyperelliptic curves using {M}onsky-{W}ashnitzer
  cohomology.
\newblock {\em Journal of the Ramanujan Mathematical Society}, 16:323--338,
  2001.

\bibitem{kuehnert_isomorphic_2022}
B.~Kuehnert, G.~Schlafly, and Z.~Yi.
\newblock On isomorphic {K}-rational groups of isogenous elliptic curves over
  finite fields.
\newblock {\em Rose-Hulman Undergraduate Mathematics Journal}, 23(1), 2022.

\bibitem{kuhn2022finding}
N.~Kuhn and R.~Pink.
\newblock Finding endomorphisms of {D}rinfeld modules.
\newblock {\em Journal of Number Theory}, 232:118--154, 2022.

\bibitem{lang1987elliptic}
S.~Lang.
\newblock {\em Elliptic functions}.
\newblock Springer, 1987.

\bibitem{lang}
S.~Lang.
\newblock {\em Algebra}.
\newblock Springer, 3rd edition, 2002.

\bibitem{lorenzini1996invitation}
D.~Lorenzini.
\newblock {\em An Invitation to Arithmetic Geometry}.
\newblock American Mathematical Society, 1996.

\bibitem{schost}
Y.~Musleh and E.~Schost.
\newblock Computing the characteristic polynomial of a finite rank two
  {D}rinfeld module.
\newblock In {\em Proceedings of the 2019 ACM on International Symposium on
  Symbolic and Algebraic Computation}, pages 307--314. ACM, 2019.

\bibitem{ros02}
M.~Rosen.
\newblock {\em Number Theory in Function Fields}.
\newblock Springer, 2002.

\bibitem{rostovtsev2006public}
A.~Rostovtsev and A.~Stolbunov.
\newblock Public-key cryptosystem based on isogenies.
\newblock Cryptology ePrint Archive, \url{https://ia.cr/2006/145}, 2006.

\bibitem{silverman1994advanced}
J.~H. Silverman.
\newblock {\em Advanced topics in the arithmetic of elliptic curves}.
\newblock Springer, 1994.

\bibitem{villa-salvador}
G.~Villa~Salvador.
\newblock {\em Topics in the Theory of Algebraic Function Fields}.
\newblock Birkhäuser, 2006.

\bibitem{von_zur_gathen_modern_2013}
J.~von~zur Gathen and J.~Gerhard.
\newblock {\em Modern Computer Algebra}.
\newblock Cambridge University Press, 3 edition, 2013.

\bibitem{cryptoeprint:2022/438}
B.~Wesolowski.
\newblock Computing isogenies between finite {D}rinfeld modules.
\newblock Cryptology ePrint Archive, Paper 2022/438, 2022.
\newblock \url{https://eprint.iacr.org/2022/438}.

\end{thebibliography}
\end{document}